\def\longrightharpoonup{\relbar\joinrel\rightharpoonup}
\def\longleftharpoondown{\leftharpoondown\joinrel\relbar}
\def\longrightleftharpoons{\mathop{\vcenter{\hbox{\ooalign{\raise1pt\hbox{$\longrightharpoonup\joinrel$}\crcr\lower1pt\hbox{$\longleftharpoondown\joinrel$}}}}}}
\def\rxn{\mathop{\rightarrow}\limits}  
\def\revrxn{\mathop{\rightleftharpoons}\limits}
\renewcommand{\vec}[1]{\mathbf{#1}}
\newcommand{\vx}{\vec{x}}
\newcommand{\vq}{\vec{q}}
\newcommand{\vp}{\vec{p}}
\newcommand{\R}{\mathbb{R}}
\newcommand{\Z}{\mathbb{Z}}
\def\longrightharpoonup{\relbar\joinrel\rightharpoonup}
\def\longleftharpoondown{\leftharpoondown\joinrel\relbar}
\def\longrightleftharpoons{\mathop{\vcenter{\hbox{\ooalign{\raise1pt\hbox{$\longrightharpoonup\joinrel$}\crcr\lower1pt\hbox{$\longleftharpoondown\joinrel$}}}}}}
\def\rxn{\mathop{\rightarrow}\limits}  
\def\revrxn{\mathop{\rightleftharpoons}\limits}
\newcommand{\tp}[1]{{#1^\top}}
\newcommand{\bt}[1]{{#1^\bot}}
\newcommand{\xit}{\tp{x_i}}
\newcommand{\xib}{\bt{x_i}}
\newcommand{\Xit}{\tp{X_i}}
\newcommand{\Xib}{\bt{X_i}}
\newcommand{\calA}{\mathcal{A}}
\newcommand{\calR}{\mathcal{R}}
\newcommand{\calF}{\mathcal{F}}
\newcommand{\calX}{\mathcal{X}}
\definecolor{burntorange}{rgb}{0.75, 0.34, 0} 
\and \url{https://web.cs.ucdavis.edu/~doty/}}
\author{Mina {Latifi}}{University of California--Davis, Davis, CA, USA \and \url{https://www.linkedin.com/in/mina-latifi/}}{milatifi@ucdavis.edu}
{https://orcid.org/0009-0002-0116-0519}
{NSF awards 2211793, 1900931 and DoE award DE-SC0024467.}
\author{David {Soloveichik}}{The University of Texas at Austin, Austin, TX, USA \and \url{https://www.solo-group.link/}}{david.soloveichik@utexas.edu}
{https://orcid.org/0000-0002-2585-4120}
{NSF awards 2200290, SemiSynBio III: GOALI award, DoE award DE-SC0024467, Schmidt Sciences Polymath Award.}
\authorrunning{D. Doty and M. Latifi and  D. Soloveichik} 
\keywords{analog computing, chemical reaction network, transcriptional network, gene regulatory network, polynomial differential equation}
\begin{document}

\title{Analog computation with transcriptional networks}

\opt{sub,final}{
\author{David Doty\inst{1}\orcidID{0000-0002-3922-172X} \and Mina Latifi\inst{2}\orcidID{0009-0002-0116-0519}
\and
David Soloveichik\inst{3}\orcidID{0000-0002-2585-4120}}
\authorrunning{D. Doty and M. Latifi and  D. Soloveichik} 

\institute{University of California Davis
\email{doty@ucdavis.edu}
\url{https://web.cs.ucdavis.edu/~doty/}
\and University of California Davis \email{milatifi@ucdavis.edu}
\url{https://www.linkedin.com/in/mina-latifi/}
\and The University of Texas at Austin
\email{david.soloveichik@utexas.edu}
\url{https://www.solo-group.link/}
}}

\maketitle

\begin{abstract}
Transcriptional networks represent one of the most extensively studied types of systems in synthetic biology. 
Although the completeness of transcriptional networks for digital 
logic is
well-established,
\emph{analog} computation plays a crucial role in biological systems and offers significant potential for synthetic biology applications.
While transcriptional circuits typically rely on cooperativity and highly nonlinear behavior of transcription factors to regulate protein \emph{production}, they are often modeled with simple linear \emph{degradation} terms. 
In contrast, general analog dynamics require both positive and negative nonlinear terms, seemingly necessitating control over not just transcriptional (i.e., production) regulation but also the degradation rates of transcription factors. 

Surprisingly, we prove that controlling transcription factor production (i.e., transcription rate) without explicitly controlling degradation is mathematically complete for analog computation, achieving equivalent capabilities to systems where both production and degradation are programmable. 
We demonstrate our approach on several examples including oscillatory and chaotic dynamics, analog sorting, memory, PID controller, and analog extremum seeking.  
Our result provides a systematic methodology for engineering novel analog dynamics using synthetic transcriptional networks without the added complexity of degradation control and informs our understanding of the capabilities of natural transcriptional circuits. 

We provide a compiler,
in the form of a Python package that can take any system of polynomial ODEs and convert it to an equivalent transcriptional network implementing the system \emph{exactly}, under appropriate conditions.
\end{abstract}

\keywords{Analog computing \and 
Transcriptional network \and Gene regulatory network \and Polynomial differential equation}

\section{Introduction}
\label{sec:intro}

A \emph{transcription factor} $X$ is a protein that regulates the transcription (DNA $\to$ RNA) of a gene coding for a protein $Y$, either increasing the rate of production of $Y$ (activation) or decreasing it (repression).
$Y$ could itself be another transcription factor.
A \emph{transcriptional network} is a set of transcription factors that regulate each other in this way.

Complex transcriptional networks have been extensively studied in synthetic biology to implement specific mathematical functions and behaviors, such as oscillation~\cite{elowitz2000synthetic}, bistability~\cite{gardner2000construction},
Boolean logic~\cite{nielsen2016genetic}, and analog function computation~\cite{daniel2013synthetic,sarpeshkar2014analog}.
Many of these studies use transcription factors characterized by first-order (linear) degradation dynamics.
Linear degradation provides an accurate model for studying promoter dynamics in bacteria, particularly in the context of transcriptional rate variability due to growth conditions and extrinsic factors
\cite{rudge2016characterization}.

Various techniques can potentially be employed to construct transcriptional networks with arbitrary complexity and wiring.
One such approach is CRISPR interference (CRISPRi), a gene repression method that silences specific genes without altering DNA sequences~\cite{qi2013repurposing}.
CRISPRi has been used to design synthetic gene circuits, including logic circuits, bistable networks (toggle switches), stripe pattern formation using incoherent feed-forward loops (IFFL), and oscillators \cite{santos2020multistable,gander2017digital}.
In addition to CRISPRi, CRISPR activation (CRISPRa) methods have been developed to activate genes by attaching transcriptional activator domains to dCas9~\cite{chavez2015highly,konermann2015genome}. 
CRISPRa has been used in the design of synthetic systems, such as signal amplification~\cite{tanenbaum2014protein}. 
Recent studies have shown that dual-mode CRISPRa/i architectures can simultaneously enable gene activation and repression within the same system~\cite{moon2025dual}.
Gene regulatory networks can also be implemented in controlled \emph{in vitro} transcriptional systems, such as the synthetic oscillators demonstrated in~\cite{kim2011synthetic}, where interacting DNA templates and transcriptional reactions produce dynamic regulatory behavior.

These studies suggest that transcriptional networks are an expressive and reliable target for implementing analog computation and other sophisticated dynamical behaviors in cells, i.e., a useful \emph{in vivo} programming language.
However, the theoretical limits to their power are poorly understood:
\emph{How} expressive are these networks?
What class of analog computations are they able to achieve?
To answer these questions,
we must formalize a precise model of transcriptional networks.
There are various related approaches to this~\cite{dejong2002modeling}, typically using Hill functions~\cite{alon2019introduction} to control the production rate (positive terms of $\frac{dx}{dt}$) of a transcription factor $X$ (with concentration $x$) and having a single negative term $-\gamma x$, for some constant $\gamma > 0$,
i.e., linear degradation.

The latter constraint follows from the idea that, although mechanisms exist to regulate the \emph{production} of $X$ with other transcription factors, \emph{decay} typically occurs through two mechanisms:
degradation by proteases,
and cell division that increases volume, effectively decreasing concentrations.
When proteases act non-specifically\footnote{
    \Cref{sec:conclusion} discusses this assumption in more detail.
}
and/or the latter mechanism is dominant---the regime studied in this paper---the degradation constant $\gamma$ is the same for all transcription factors.

In contrast to this uniform linear decay, most of the interesting nonlinear behavior in typical transcriptional networks happens in the production rates.
We model these as any \emph{nonnegative Laurent polynomial}, a generalization of multivariate polynomials to allow negative integer exponents,
for instance $x^2 y^{-3} + 4 z  w^{-1} + 5$,
where negative-exponent factors are repressors and positive-exponent factors are activators.
(While this formal class unrealistically allows arbitrarily many activators and repressors per promoter, \Cref{thm:simplification} shows that the construction can be reduced to monomials of the form $\alpha \frac{a_1 a_2}{r}$, corresponding to at most two activators and one repressor, without loss of computational power.)
These positive Laurent-polynomial production rates relate to broader power-law kinetic formalisms that allow negative exponents (often without restricting them to integers), including S-system models in Biochemical Systems Theory~\cite{vilela2008parameter} and generalized-mass-action/power-law systems in CRNT~\cite{arceo2015chemical,boros2018center,muller2016sign}.
We justify in \Cref{sec:justification-laurent} our choice of production rates as approximable by the more standard Hill functions.

Polynomial ordinary differential equations (ODEs) are a noteworthy class of analog computational models,
which have received far more theoretical attention than transcriptional networks.
As shown by Claude Shannon~\cite{shannon1941mathematical},
polynomial ODEs are equivalent to the GPAC (general purpose analog computer) model that Shannon defined to model the capabilities of the differential analyzer machine invented by Vannevar Bush~\cite{bush1931differential} to automate numerical solution of differential equations.
It is known~\cite[Theorem XI]{shannon1941mathematical}
(see also corrected proof in~\cite[Footnote 12]{pour1974abstract})
that ODEs defined by 
non-hypertranscendental functions\footnote{
    Non-hypertranscendental  functions are those that are solutions of algebraic differential equations. This includes all algebraic functions such as polynomials, Laurent polynomials, and $\sqrt{x}$, as well as some transcendental functions such as exponential, logarithm, trigonometric, and hyperbolic functions,
    but excluding, for example, the gamma function $\Gamma$ generalizing factorial (for positive integer $n$, $\Gamma(n) = (n-1)!$) to complex inputs.
}
can be converted into an equivalent set of polynomial ODEs (possibly over a larger set of variables), such that the variables in the polynomial ODEs corresponding to the original variables have the same trajectories.\footnote{
    For example, the non-polynomial ODE 
    $x' = \sqrt{x}$
    is equivalent to the polynomial ODEs 
    $x' = y$ and $y' = 1/2$,
    letting $y = \sqrt{x}$.
    Evidently $x' = y$,
    and by the chain rule,
    $y' 
    = \frac{d \sqrt{x}}{dt}
    = \frac{d \sqrt{x}}{dx} \cdot \frac{dx}{dt}
    = \frac{1}{2 \sqrt{x}} \cdot y
    = \frac{1}{2 \sqrt{x}} \cdot \sqrt{x}
    = 1/2$
    as chosen.
}
Furthermore, polynomial ODEs have recently been discovered to have essentially maximal \emph{digital} computational power, being able to simulate arbitrary Turing machines~\cite{bournez2017odes}.
Being so expressive, polynomial ODEs are an attractive target for implementation by other analog models of computation.

Compared to polynomial ODEs, 
the primary limitation of transcriptional networks is that the ODE for a transcription factor $X$ has exactly one negative term and it is of the linear form $-\gamma x$, where $\gamma$ is the same for all transcription factors.
In contrast, polynomial ODEs can have arbitrarily complex negative terms.
It thus appears difficult to simulate arbitrary polynomial ODEs with a system that has such a strong limitation on its negative terms.

Surprisingly, our main result, \Cref{thm:ODEtoTN} (\Cref{sec:results}), shows that ``almost''\footnote{
    The condition we require the polynomial ODEs to satisfy, other than staying nonnegative, is that any variable that starts at 0 or converges to 0 must have a ``Hungarian form'' ODE, 
    which roughly means ``is the ODE of some chemical reaction network''; 
    see \Cref{def:hungarian}.
} any system of polynomial ODEs whose variables stay nonnegative can be ``ratio-implemented'' by a transcriptional network.
This means that for each variable $x$ in the original ODEs, there is a pair of transcription factors $\tp{X},\bt{X}$ (``$X$-top'' and ``$X$-bottom''), such that for all times $t$,
$\frac{\tp{x}(t)}{\bt{x}(t)} = x(t)$, where $\tp{x}(t) , \bt{x}(t)$ represent the concentrations of $\tp{X},\bt{X}$ at time $t$ respectively.
Intuitively, the reason that this ratio representation of values helps is that, if all transcription factors decay at rate $\gamma$ for the same amount of time, this preserves all ratios between them.

The nonnegativity requirement is not a fundamental limitation. One can use a \emph{dual-rail} representation (c.f.,~\cite{oishi2011biomolecular,cardelli2020electric}),
in which each variable $x$ is encoded as the difference of two nonnegative quantities $x^+ - x^-$, allowing arbitrary real values to be represented while maintaining nonnegativity of all variables. This also avoids issues with variables approaching zero in non-Hungarian form, since both $x^+$ and $x^-$ remain positive even when their difference converges to zero.
In some of our examples such as the sine-cosine oscillator, we use a simpler ad-hoc trick of ``shifting values up'' to maintain nonnegativity; see~\Cref{sec:examples}.
(This is not a general-purpose technique since it requires a lower bound on the negative values to know how much to shift; for example since sine and cosine are at least -1, we shift up by 2, and could shift by any $\alpha > 1$, to maintain positivity.)

A na\"{i}ve implementation of the ratio-representation idea has the property that even if the variables of the original ODEs stay bounded 
(both above by some finite upper bound $u$, 
and also bounded away from 0 by some lower bound $\ell > 0$),
the transcription factors could either diverge to $\infty$ or converge to 0,
despite the ratios $\frac{\tp{x}}{\bt{x}}$ staying in the interval $(\ell,u)$.
(This simpler construction is shown in the proof of \Cref{thm:ODEtoTN} as an intuitive warmup to the full construction.)
Our construction (see \Cref{alg:construction}) overcomes this obstacle, maintaining the useful property that if the original variables were bounded in some interval $(\ell, u)$,
then the new transcription factor variables are bounded in some (possibly larger) interval $(\ell', u')$.
The precise condition is somewhat complex and captured in \Cref{def:implementation,def:bounded-positive,def:bounded-implementation}.

Next (\Cref{thm:simplification}, \Cref{sec:simplification}), we show how to trade off ``complexity'' in this construction.
It is reasonable to worry that using arbitrarily many activators and repressors on a single promoter site pushes the regulation model too far.
We show how to modify the construction of \Cref{alg:construction} so that each Laurent monomial term is at most of the form $\alpha \frac{a_1 a_2}{r}$,
i.e., each promoter site is regulated by at most two activators ($a_1,a_2$)
and one repressor ($r$).
The tradeoff is that additional transcription factors must be introduced in addition to the pair $\xit,\xib$ for each original polynomial ODE variable $x$.
This is done by first converting the polynomial ODEs to have extra variables $z_{i,j}$ corresponding to the $j$'th monomial on the right-hand side of $\dot x_i$.
Each such new variable then would get its own pair of transcription factors $\tp{z_{i,j}}, \bt{z_{i,j}}.$
One additional constraint for this construction to work is that all variables must start strictly positive,
though we are unsure if this is truly necessary or simply helps our proof,
whereas \Cref{thm:ODEtoTN} allows ``Hungarian'' variables (\Cref{def:hungarian}) to start at 0 before becoming positive.

Finally, we provide a compiler that implements the main construction of \Cref{thm:ODEtoTN}, 
the Python package \texttt{ode2tn}~\cite{ode2tn}.
All examples in \Cref{sec:examples} were simulated using \texttt{ode2tn}.

We note prior unpublished work studying the computational power of transcriptional networks for analog computation with a different approximation of Hill-function kinetics than in the current paper~\cite{venturellicompleteness}.
\todo{DS: Moved to the end of intro}

\section{Preliminaries}
\label{sec:prelim}


Throughout this paper, for a real variable $x$, a function of time $t$, we sometimes write $x'$ to denote $\frac{dx}{dt}$, the derivative of $x$ with respect to $t$.

We recall the definition of a Laurent polynomial, which generalizes the idea of a (multivariate) polynomial to allow negative integer exponents.

\begin{definition}
[Laurent polynomial]
\label{def:laurent}
A \emph{Laurent polynomial} is of the form: 
$p(x_1,\ldots,x_n) =  \sum_{k=1}^\ell c_k \prod_{i=1}^n x_i^{e_{k,i}}$, 
where each $c_k \in \R$
and $e_{k,i} \in \Z$ (not necessarily positive).
If each $c_k \ge 0$, we say $p$ is \emph{nonnegative}.
If $\ell = 1$, we say $p$ is a \emph{Laurent monomial}.
\end{definition}

For example $\frac{x^2}{y^3} + 4 z^5 + \frac{5}{x} - \frac{6}{w^2}$ is a Laurent polynomial consisting of four Laurent monomials.
It is not nonnegative because the coefficient of the last monomial is $-6$.
Note that a (standard) polynomial is a Laurent polynomial with all $e_{k,i} \geq 0$. 
We similarly say a polynomial is \emph{nonnegative} if all of its coefficients $c_k \ge 0$.

We now define a class of ODEs that are of interest because they correspond precisely to the class of ODEs that represent the dynamical systems known as chemical reaction networks (CRNs)~\cite{cardelli2020electric}.
We follow the convention throughout this paper that for a chemical species $X$ (including transcription factors discussed below) written in uppercase, its concentration is represented by the same lowercase letter $x$.
CRNs consist of reactions among abstract species such as $A+2B \rxn^{k} 3C$.
Each reaction's rate is the product of its reactant concentrations and the \emph{rate constant} written above the arrow, and 
each species $X$ has a positive (respectively, negative) derivative term for each reaction net producing (resp., consuming) $X$.
For example, the reaction above would correspond under the standard mass-action model of kinetics to the ODEs
$
a' = 
-kab^2,
b' = 
-2kab^2,
c' = 
3kab^2.
$

A negative term for $x'$ means in the reaction that contributes to that term, $X$ is net consumed, 
thus is a reactant, 
implying $x$ must appear as a factor in the negative term, 
motivating the following definition, originating from~\cite{hars1981inverse}, with terminology taken from~\cite{cardelli2020electric}.

\begin{definition}
\label{def:hungarian}
A system of polynomial ODEs over variables $x_1,\dots,x_n$ is in \emph{Hungarian form} for a particular variable $x_i$ if
there are nonnegative polynomials $p_i^+$ and $q_i^-$  such that
\[
x_i' = 
p_i^+(x_1,\dots,x_n) - x_i \cdot q_i^-(x_1,\dots,x_n).
\]
\end{definition}

In other words $q_i^-$ consists of the negative terms of $x_i'$, divided by $x_i$.
We say that a system of polynomial ODEs is in \emph{Hungarian form} (without reference to a variable) if it is in Hungarian form for all its variables. 
In fact, a system of ODEs is in Hungarian form if and only if it is the set of ODEs describing the kinetics of some CRN~\cite{hars1981inverse}.
It is also known that a set of polynomial ODEs is in Hungarian form if and only if, for all nonnegative initial conditions, the variables stay nonnegative.

\subsection{Transcriptional networks}

Intuitively, a transcriptional network is a system of chemical species, called \emph{transcription factors},
where each factor $X$ is produced at a rate depending on other factors, and decays at rate $\gamma x$,
where $\gamma>0$ is the same for all transcription factors.
The factors regulating $X$'s production rate $p$ are either \emph{activators}, increasing $p$, or \emph{repressors}, decreasing $p$ (\cref{fig:TN-example}).
We allow the production rate to be an arbitrary Laurent polynomial whose variables are transcription factor concentrations; this choice is justified in \Cref{sec:justification-laurent}.

\begin{definition}
\label{defn:transcriptional-network}
A \emph{transcriptional network} is a system of ordinary differential equations (ODEs) over variables $x_1,\dots,x_n$ representing \emph{transcription factor} concentrations
\begin{eqnarray*}
    x_1' &=& p_1(x_1,x_2, \dots x_n) - \gamma x_1 
    \\
    x_2' &=& p_2(x_1,x_2, \dots x_n) - \gamma x_2
    \\&\vdots&\\
    x_n' &=& p_n(x_1,x_2, \dots x_n) - \gamma x_n,
\end{eqnarray*}
where each $p_i$ is a nonnegative Laurent polynomial, and $\gamma > 0$ is called the \emph{decay constant}.
\end{definition}

Positive exponent factors such as $a$ or $a^2$ in Laurent polynomials for $x'$ imply that $A$ is an activator for $X$;
negative exponent factors such as $\frac{1}{r}$ imply that $R$ is a repressor for $X$.

\begin{figure}[!ht]
\centering
\includegraphics[width=0.9\linewidth]{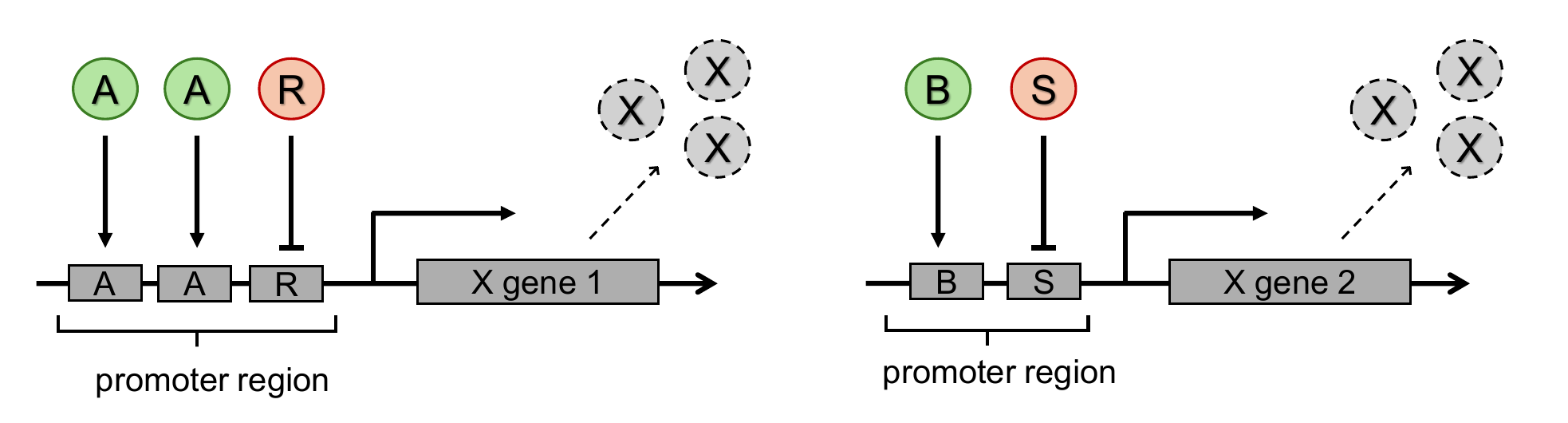}
\caption{
    The transcription factors, activator $A$ and repressor $R$, regulate $X$ gene 1 (shown on the left) controlling the production rate of protein $X$ 
    (assumed to be produced instantly on translation,
    though in reality only an mRNA strand would be produced by transcription, and subsequent ribosomal translation of the mRNA into a protein would be necessary to produce $X$). Similarly, activator $B$ and repressor $S$, regulate $X$ gene 2 (shown on the right).
    Each transcription factor has a specific binding site on the promoter, allowing it to bind independently of the other transcription factors.
    In cells, $X$ may or may not be a transcription factor itself;
    in this paper we assume that all genes encode transcription factors.
    Based on \Cref{defn:transcriptional-network}, the ODE for transcription factor $X$ is $x' = a^2/r + b/s - \gamma x$.
    (Uppercase letters denote transcription factors, and lowercase letters denote their concentrations.)
}  
\label{fig:TN-example} 
\end{figure}    

\begin{toappendix}
    
\subsection{Justification of Laurent polynomials as class of production rates}
\label{sec:justification-laurent}

When the binding of the activator $A$ is weak yet the binding of the repressor $R$ is strong, 
it is common to approximate the transcription rate as proportional to the ratio $a/r$; 
see, for example, ref.~\cite{goentoro2009incoherent} and \cite[Section 10.4]{alon2019introduction}.
We now present a self-contained argument, generalizing this approximation to multiple activators and repressors, and several copies of the same transcription factor gene, resulting in the Laurent polynomial form of transcription rate.

Transcription factors are proteins that bind to a specific DNA sequence,
known as a \emph{promoter region}, 
to regulate the transcription of a nearby gene~\cite{alon2019introduction}, ultimately influencing the amount of protein the gene encodes.
These transcription factors can function as activators, increasing the rate of transcription (production), or as repressors, decreasing it.
Each transcription factor undergoes decay due to a combination of dilution (due to cell division) and active degradation. 
For the purposes of this work, we assume that the rate of degradation is the same for every transcription factor (i.e., dilution or degradation by non-specific enzymes),
implying a \emph{uniform linear decay} rate $\gamma$ across all transcription factors.

A transcriptional network describes the interactions between transcription factors and their target genes, where the transcription factors  regulate genes that  themselves encode \emph{other transcription factors}.
Thus all transcription factors are produced at a rate determined entirely by the current concentrations of transcription factors in the network.
An activator $A$ of a transcription factor $X$ increases $X$'s production rate, 
and a repressor $R$ decreases $X$'s production rate (\cref{fig:TN-example}).

The mechanism of regulation of $X$ is by influencing the rate at which RNA polymerase can bind to the promoter region upstream of the gene for $X$. 
If the RNA polymerase successfully binds and transcribes, i.e. creates an mRNA strand representing $X$, then more $X$ will be produced (after the mRNA is translated into the protein $X$ by a ribosome, assumed in this model to happen negligibly soon after transcription.)
Transcription factors affect production rate by influencing the probability that an RNA polymerase successfully binds and transcribes.
There is some chance that a polymerase will detach on its own before successfully beginning transcription;
activators help the polymerase stick around for longer and increase the chances of successful transcription.
Repressors simply block the polymerase by getting in the way.

These influences of a transcription factor on its target gene can be described using a Hill function, which is derived from the probability of the transcription factor binding to the gene's promoter region.
The Hill functions for an activator $A$ and a repressor $R$, with respective concentrations $a$ and $r$, are defined as follows. Here, $K_A,K_R \in \R^+$ respectively represent activation and repression coefficients, $\alpha_a,\alpha_r \in \R^+$ denote maximal promoter activity:
\begin{equation}\label{eq:Hill_act}
   H(a) = \alpha_a \frac{a}{K_A + a} 
\end{equation}

\begin{equation}\label{eq:Hill_rep}
   H(r) = \alpha_r \frac{K_R}{K_R + r} 
\end{equation}

The term $\frac{a}{K_A + a}$ denotes the probability that the activator $A$ is bound to the promoter.
Similarly, the term $\frac{K_R}{K_R + r}$ denotes the probability that the repressor $R$ is not bound.
This expression arises from the equivalence $1 - \frac{r}{K_R + r} = \frac{K_R}{K_R + r}$.

The fully general Hill function also has a ``cooperativity'' constant $c$, appearing as an exponent in the terms above like $K_A^c$ and $a^c$.
In this paper we assume the non-cooperative case of $c=1.$
In other words, each transcription factor binds independently to its own specific binding site on the promoter, without being influenced by the presence of other transcription factors binding to the same gene. 

We also assume that a transcription factor $X$ can potentially be produced by multiple copies of its gene,
each of which, due to different promoter regions upstream of each copy, is regulated by different transcription factors.\footnote{
    This is ultimately our justification to go from single-term Laurent monomials to multi-term Laurent polynomials; each gene copy contributes additively to the production rate.
    Most of this section justifies the use of Laurent monomials, in place of products of Hill functions,
    for describing the regulation of a single copy.
}
Suppose that a transcription factor $X$ is encoded in $m$ different copies $\{g_1,g_2,...,g_m\}$ of the gene,
each regulated by multiple transcription factors,
with the $i$'th copy having activators $\calA_i \subseteq \calF$ and repressors $\calR_i \subseteq \calF$.
Then the production rate of $X$ can be expressed as:
\[
\sum_{i=1}^m \alpha_i
\prod_{A \in \calA_i} \frac{a}{K_A + a}
\prod_{R \in \calR_i} \frac{K_R}{K_R + r}
\]

We can simplify this summation if we factor out $K_R$'s from the numerator and include them in the constant ($\alpha_i' = \alpha_i \prod_{R \in \calR_i} K_R$) to get


\[
\sum_{i=1}^m \alpha_i' \prod_{A \in \calA_i} \frac{a}{K_A + a} 
\prod_{R \in \calR_i} \frac{1}{K_R + r}
\]

We note that if $K_R$ is assumed to be very small and factored out for each repressor, then the coefficient varies more significantly as the number of repressors increases. To show that a wide range of coefficients in a Laurent monomial can be achieved, we introduce a modified construction in \cref{sec:simplification} that restricts each promoter to have only a fixed number of repressors and activators.




\begin{remark}
\label{rem:proprtional}
    Suppose $\calA = \{ A_1,A_2, \dots, A_m\}$ and $\calR = \{R_1,R_2,\dots,R_l\}$ respectively represent the sets of activators and repressors associated with a specific promoter.
    If the activation coefficients  ($K_{A_i}$) are large enough compared to the activator's concentration,
    and the repression coefficients ($K_{R_j}$) are small enough compared to the repressor's concentration,
    then the transcriptional rate of this promoter is proportional to the ratio of the product of the activator concentrations to the product of the repressor concentrations, which is a Laurent monomial.
    \[
        \alpha_i'\prod_{A_i \in \calA} 
        \frac{a_i}{K_{A_i} + a_i}
        \prod_{R_j \in \calR}
        \frac{1}{K_{R_j} + r_j} 
        \approx \alpha_i'' 
        \frac{\prod_{A_i \in \calA} a_i}{\prod_{R_j \in \calR} r_j}. 
    \]
    Note that if a repressor approaches zero, then no finite value of $K_{R_j}$ will satisfy the approximation. 
Thus we need to make sure that repressors stay bounded away from zero (see \Cref{rem:repressors-do-not-approach-zero}).
\end{remark}

In other words,
    in the limit of large $K_{A_i}$ and small $K_{R_j}$, the probability of activator binding decreases while the probability of repressor binding increases, 
and thus the rate of transcription decreases. 
Mathematically, we can assume that the maximal promoter activity $\alpha$ increases to balance the effect,
but physically there is a resulting speed-accuracy tradeoff that is worth further exploration.

\Cref{rem:proprtional} justifies our use of nonnegative Laurent polynomials as the class of production rates available in programming transcriptional networks.
By nonnegative we mean that all the Laurent monomials have nonnegative coefficients, corresponding to the control of the rate of production.

\end{toappendix}

\section{Transcriptional networks can implement nonnegative polynomial ODEs}
\label{sec:results}
\begin{toappendix}
\label{apx:results}
\end{toappendix}

This section demonstrates how transcriptional networks can implement any system of polynomial ordinary differential equations (ODEs) with the caveats given in \cref{thm:ODEtoTN}.
The key concept is that variables $x_i$ in the original system are represented as a ratio of two transcription factor concentrations $x_i = \xit / \xib$. 
This ratio representation seems natural in the context of identical linear decay since it remains unchanged when both the numerator and denominator decrease at the same first-order rate (i.e., $-\gamma \xit$ and $-\gamma \xib$).

However, we must develop technical machinery to address the following questions:
(1) How to control the production rate of $\tp{X_i}$ and $\bt{X_i}$ to ensure that the ratio $x_i = \xit / \xib$ traces the original trajectory?
(2) Note that the ratio representation may correctly represent the value $x_i$, but both the numerator and denominator go to infinity or to zero.
Clearly transcription factors must remain bounded, and, further, we cannot expect the system to be reliable if the ratio is represented with very small values in both the numerator $\xit$ and denominator $\xib$ to represent a much larger $x_i$,
since small values are subject to more environmental noise.\footnote{
    That said, it is realistic to expect that the original ODEs might converge $x_i$ to 0; in which case the transcriptional network should of course converge  $\xit$ to 0 as well.
}
How can we avoid such instability?


A \emph{trajectory of a dynamical system} (e.g., a set of polynomial ODEs)
over a set of variables $x_1, \dots, x_n$ is a function $\rho: \R_{\geq 0} \to \R^n$, where $\rho(t)_{i}$ is the value of $x_i$ at time $t$, also written as $x_i(t)$.
The trajectory may not be defined over all time; then we allow $\rho: [0, t_\mathrm{max}) \to \R^n$ for $t_\mathrm{max} \in \R_{\ge 0} \cup \infty$.\footnote{
    For most systems $t_\mathrm{max} = \infty$.
    However, even natural systems such as CRNs can have pathological behavior:
    for example the reaction $2X \rxn 3X$, starting with $X(0)=1$, has solution $X(t) = 1 / (1-t)$, 
    so that $\lim_{t\to 1} X(t)=\infty$.
    However, this trajectory is defined and finite on the bounded domain $[0,1)$ (i.e., $t_\mathrm{max}=1$),
    and indeed, the construction of \Cref{thm:ODEtoTN} applied to that CRN gives a transcriptional network where $\lim_{t\to 1} \tp{x}(t) / \bt{x}(t) = \infty$, yet $\tp{x}(t)$ and $\bt{x}(t)$ are finite for all $t < 1$.
}

\begin{definition}
\label{def:implementation}
Let $t_\mathrm{max} \in \R_{\ge 0} \cup \infty$ and let $\rho: [0, t_\mathrm{max}) \to \R_{\ge 0}^n$ be a trajectory of a dynamical system defined for all  $t \in [0,t_\mathrm{max})$. 
We say a transcriptional network $\calF$ 
\emph{ratio-implements}
$\rho$ 
if each variable $x_i$ of $\rho$ is mapped to a pair of transcription factors $\Xit,\Xib$ of $\calF$ such that $\calF$ is defined on $[0, t_\mathrm{max})$ and the following holds.\footnote{
    Certainly if $x_i(t)$ is defined for all $t \in [0, t_\mathrm{max})$, so must be $\xit$ and $\xib$ for their ratio to obey $x_i(t) = \xit(t)/\xib(t)$ on that interval, but the definition allows (as in \Cref{thm:simplification}, for instance) to introduce auxiliary variables, and these must be be defined on that interval as well.
}
For each variable $x_i$ of $\rho$, for any $\xit(0) \in \R_{\ge 0}$, $\xib(0) \in \R_{> 0}$ such that $x_i(0) = \xit(0)/\xib(0)$,
for all future times $0 < t < t_\mathrm{max}$,
$x_i(t) = \xit(t) / \xib(t)$.
\end{definition}

Of course, the underlying polynomial ODEs may diverge to $\infty$ or converge to 0 for some variables; if so then certainly the transcriptional network must do the same (e.g., sending $x$ to 0 either by sending $\tp{x}$ to 0 or $\bt{x}$ to $\infty$).
However, if the original variables stay bounded in an open interval $(\ell,h)$ for $0 < \ell < h < \infty$, then we would like the transcription factors to stay bounded in this way as well.

\begin{definition}
\label{def:bounded-positive}
A trajectory $\rho$ is \emph{bounded-above} on a set of variables $X$ if for all $x \in X$, there is $b_\mathrm{max}$ such that for all $t \in [0,t_\mathrm{max})$, $x(t) < b_\mathrm{max}$
(i.e., $x(t)$ does not diverge,
i.e., 
{$\sup_{0 \le t < t_\mathrm{max}}$}
$x(t) < \infty$).    
Similarly, $\rho$ is \emph{bounded-positive} on a set of variables $X$ if for all $x \in X$, 
there is $b_\mathrm{min} > 0$ such that for all $t \in [0,t_\mathrm{max})$, $x(t) > b_\mathrm{min}$.
We say $\rho$ is \emph{bookended} on a set of variables $X$ if it is both bounded-above and bounded-positive on $X$.
\end{definition}
When we omit the set of variables $X$, then bounded-above, bounded-positive, and bookended means with respect to all the variables $X = \{x_1, \dots, x_n\}$.

\begin{definition}
\label{def:bounded-implementation}
Suppose a transcriptional network $\calF$ ratio-implements a trajectory $\rho$.
We say that $\calF$ ratio-implements $\rho$ \emph{stably}
if the trajectory of $\calF$ is bounded-positive on the set of all $x_i^\bot$, and if $\rho$ being bounded above on all variables implies that $\calF$ is bounded above on all variables
(thus bookended on all $\xib$).
\end{definition}

Note that the final part of \Cref{def:bounded-implementation} does not follow from the first and the correctness of the ratio-representation:
it is possible that in $\rho$ some variable $x_i$ goes to 0,
yet possibly $\calF$ ratio-implements this by letting $\xib$ go to $\infty$.
We want to avoid this,
since it is unrealistic to have concentrations approach $\infty$.


Note that if $\rho$ diverges on even a single variable, then possibly any transcription factor concentration $\xit$ or $\xib$ can diverge while maintaining the correct ratio $x_i = \xit / \xib$.
On the other hand if $\rho$ does not diverge, 
then every transcription factor in a stable ratio-implementation remains bounded-above.
Ensuring that $\xib$ is bounded-positive avoids the other issue identified at the beginning of this section: 
the possibility that both $\xit$ and $\xib$ go to zero while maintaining the correct ratio.

We note that to show an implementation of a bounded-above $\rho$ is stable, it suffices to show that the $\xib$ are bookended.

\begin{observation}
\label{obs:bdd-x-xt}
Due to the correctness of ratio-implementation, i.e., maintaining that $x_i(t) = \frac{\xit(t)}{\xib(t)}$ for all $t \in [0,t_\mathrm{max})$, if $x_i^\bot$ is bookended, then 
$x_i$ is bounded-above if and only if $\xit$ is bounded-above.
\end{observation}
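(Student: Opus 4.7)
The plan is to prove each direction by a one-line manipulation of the ratio identity $x_i(t) = \xit(t)/\xib(t)$ using the bookended hypothesis on $\xib$. Since $\xib$ is bookended on the whole time interval $[0,t_\mathrm{max})$, fix constants $0 < b_\mathrm{min} \le \xib(t) \le b_\mathrm{max} < \infty$ guaranteed by \Cref{def:bounded-positive}. These two bounds are precisely what let us convert bounds on $\xit$ into bounds on $x_i$ and vice versa.

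For the forward direction, suppose $x_i$ is bounded above by some constant $M$. Then for all $t \in [0,t_\mathrm{max})$ we have $\xit(t) = x_i(t)\cdot \xib(t) \le M \cdot b_\mathrm{max}$, so $\xit$ is bounded above. For the converse, suppose $\xit$ is bounded above by some constant $N$. Then for all $t \in [0,t_\mathrm{max})$ we have $x_i(t) = \xit(t)/\xib(t) \le N / b_\mathrm{min}$, where $b_\mathrm{min} > 0$ is essential: it is exactly the ``bounded-positive'' half of bookendedness that prevents the denominator from collapsing and blowing up the ratio.

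There is no real obstacle: the observation is a transparent consequence of the definitions, and its role is purely to justify the remark in \Cref{def:bounded-implementation} that showing the $\xib$ are bookended suffices to upgrade a ratio-implementation of a bounded-above trajectory to a stable one. The only thing worth being careful about is that the two bounds $b_\mathrm{min}$ and $b_\mathrm{max}$ must be uniform over the entire domain $[0, t_\mathrm{max})$, which is guaranteed by \Cref{def:bounded-positive} applied in both directions, so the resulting bounds $M\cdot b_\mathrm{max}$ and $N/b_\mathrm{min}$ are likewise uniform.
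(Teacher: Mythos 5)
Your proposal is correct and matches the paper's (implicit) justification: the observation is stated as an immediate consequence of the ratio identity $x_i = \xit/\xib$ together with the uniform two-sided bounds on $\xib$, which is exactly the two-line manipulation you give. The only tacit point worth noting is that the forward direction also uses nonnegativity of the trajectory (so that $x_i \le M$ and $\xib \le b_\mathrm{max}$ multiply to give $\xit \le M\,b_\mathrm{max}$), which holds since $\rho$ maps into $\R_{\ge 0}^n$.
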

In other words, if we interpret diverging as a ``realism'' constraint, then a stable implementation is precisely as realistic as the ODEs being implemented.


One could generalize the definition of ratio-implementation to other representations, including the direct representation where each variable $x$ in the original dynamical system is represented directly by a transcription factor $x$,
or perhaps a mix of the ratio representation for some variables and the direct representation for others. 
For simplicity, we state the definition in terms of the ratio representation studied in this paper.
However, in the ``extremum-seeking'' example in \Cref{sec:extremum}, we use a mixed representation.

Our main theorem says that a transcriptional network can implement arbitrary polynomial ODEs from any initial condition that stays nonnegative,
so long as any variable whose ODE is not in Hungarian form stays bounded away from 0.
(Note that staying bounded away from 0 means starting strictly positive, and also not converging to 0.)
We note that $\gamma$ is a parameter of \Cref{alg:construction}, not an output,
since in general $b_\mathrm{max}$ is uncomputable from the ODEs and initial conditions,
due to the ability of polynomial ODEs to simulate Turing machines~\cite{bournez2017odes}.
\opt{sub,final}{
A full proof of \Cref{thm:ODEtoTN} is given in  \Cref{apx:results}.
}

\todo{DD: We should emphasize that $\gamma$ cannot be computed from $P$, since PODEs are Turing-universal. In the Python package I think we actually have $\gamma$ be an input, but we should emphasize that there's no way in general to compute $\max_i \sup_{t \ge 0} p_i^-(\vec{x}(t))/x_i(t)$, so $\gamma$ is maybe better thought of as an input to the algorithm, with an (uncomputable) constraint it must obey.}

\begin{algorithm}[!ht]
\caption{Construction of a transcriptional network $\calF(\gamma)$ from polynomial ODEs $P = (p_1,\dots,p_n)$, with
$\vec{x}(t) = (x_1(t),\dots,x_n(t) )$ representing the variables at time $t$. Note that the input is just the system of polynomial ODEs, but the constant $\gamma$ below is chosen also based on bounds from a trajectory from specific initial conditions.}\label{alg:construction}

Rewrite each polynomial $p_i$ as 
$x_i' = p_i^+(x_1,x_2,\dots,x_n) - p_i^-(x_1,x_2,\dots,x_n)$ where $p_i^+$ and $p_i^-$ are non-negative polynomials.

For each $x_i$ include two transcription factors $X_i^\top$ and $X_i^\bot$ with initial concentrations $x_i^\top(0) \geq 0$ and $x_i^\bot(0) > 0$ such that: 
$x_i(0)=\frac{x_i^\top(0)}{x_i^\bot(0)}$. 

Construct the transcriptional network as:
\begin{align}
\frac{d \xit}{dt} 
&= \beta \xit/\xib + p_i^{+} \xib - \gamma \xit
\label{eq:new-top}
\\
\label{eq:new-bot}
\frac{d \xib}{dt} 
&= \beta + p_i^{-} \xib^2/\xit - \gamma \xib 
\end{align}
where $\beta > 0$ is any positive constant. $\gamma$ is a parameter of the construction and must obey $\gamma > \max_i \sup_{t \ge 0} p_i^-(\vec{x}(t))/x_i(t)$. 
Equations \eqref{eq:new-top} and \eqref{eq:new-bot} are assumed to be fully simplified; see \Cref{rem:eqs-need-simplification} and \Cref{rem:repressors-do-not-approach-zero}.
\end{algorithm}

\begin{thmrep}
\label{thm:ODEtoTN}
Let $P$ be a system of polynomial ODEs over the set of variables $\calX$ that is in Hungarian form for the subset of variables $\calX_h \subseteq \calX$ (possibly empty set).
Then there is a transcriptional network $\calF(\gamma)$, constructed by Algorithm~\ref{alg:construction}, 
such that for every upper-bound $b_\mathrm{max} > 0$,
and every lower-bound $b_\mathrm{min} > 0$, 
there is $\gamma > 0$, 
such that for every nonnegative trajectory $\rho$ of $P$ that is bounded-above by $b_\mathrm{max}$ on $\calX_h$ and bookended by $b_\mathrm{max}$ and $b_\mathrm{min}$ on $\calX \setminus \calX_h$,
$\calF(\gamma)$ stably ratio-implements $\rho$.
\end{thmrep}

Since we allowed the definition of transcriptional networks to have arbitrary Laurent polynomials (a strict superset of polynomials) as production rates,
we could also have simulated arbitrary systems of ODEs with Laurent polynomials for each derivative.
In other words, $p_i^+$ and $p_i^-$ could be nonnegative Laurent polynomials.\footnote{  
Note that we could handle even more functions on the right-hand side of the ODEs (any non-hypertranscendental functions) by first converting to a system of polynomial ODEs~\cite{pour1974abstract}.
However, the construction of \Cref{alg:construction} works \emph{directly} as shown
(without any initial pre-transformation of the ODEs)
on Laurent polynomials,
hence could result in a simpler final transcriptional network than one obtained by first converting to polynomial ODEs.}
This fact will be used later in \Cref{thm:simplification}.


\begin{proof}
We begin with a simpler ``warmup'' construction that ratio-implements $\rho$, but not stably,
before describing how to modify it to be stable.
By grouping positive and negative terms together, any polynomial $p(\vx)$ can be written as $p^+(\vx) - p^-(\vx)$, where $p^+$ and $p^-$ are polynomials with all nonnegative coefficients.
Thus any polynomial ODE can be expressed as
\[
x_i' = p_i^+(x_1,x_2,\dots,x_n) - p_i^-(x_1,x_2,\dots,x_n).
\]
$x_i$ can be represented as the ratio of two transcription factors $\xit$ and $\xib$:
$
x_i = \frac{\xit}{\xib},
$
where the initial concentrations of $\tp{X_i}$ and $\bt{X_i}$ are any nonnegative values such that the ratio $\xit / \xib$ is the desired initial value of $x_i = \rho(0)_i$. 
For brevity in the equations below, we write $p_i^+$ to denote $p_i^+(x_1,\dots,x_n)$,
and similarly for $p_i^-$.
 


We claim that the following transcriptional network ratio-implements $\rho$,
where $\gamma > 0$ is a constant to be determined later.

\begin{eqnarray}
\label{eq:x_top}
    \xit' &=&  p_i^{+} \xib - \gamma \xit
    \\
\label{eq:x_bot}
    \xib' &=& p_i^{-} \xib/x_i - \gamma \xib
\end{eqnarray}
Note that \eqref{eq:x_bot} can be equivalently written $\xib' = p_i^{-} \xib^2/\xit - \gamma \xib$; here, $\xit$ and $\xib$ are transcription factor concentrations, whereas $x_i$ is the abstract value $\frac{\xit}{\xib}$ but is not itself the concentration of any transcription factor in the system.
(See \Cref{rem:proprtional} for how to introduce such a transcription factor whose concentration is $\frac{\xit}{\xib}$, if that is desired.)
To justify that this transcriptional network ratio-implements $\rho$,
we must show that $\frac{d(\xit/\xib)}{dt} = p_i^+ - p_i^-$.

\begin{align*}
    \frac{d(\xit/\xib)}{dt} 
    &= \frac{\xit'\xib - \xit\xib'}{\xib^2}
    && \text{quotient rule}
    \\
    &= \frac{\xit'}{\xib} - \frac{\xit \xib'}{\xib^2}
    \\
    &= \frac{p_i^{+} \xib - \gamma \xit}{\xib} - \frac{\xit(p_i^{-} \xib/x_i - \gamma \xib)}{\xib^2}
    && \text{substituting by \eqref{eq:x_top} and \eqref{eq:x_bot}}
    \\
    &= p_i^{+} - \frac{\gamma \xit}{\xib} - p_i^{-} + \frac{\gamma \xit}{\xib}
    = p_i^{+} - p_i^{-}.
\end{align*}

The transcriptional network given here, though it satisfies \Cref{def:implementation}, has an important problem: 
It is not a stable ratio-implementation by \Cref{def:bounded-implementation}.
To see this, consider equation \eqref{eq:x_bot}. We can rewrite it as
\[
\frac{d \xib}{dt} = \xib( p_i^{-} /x_i - \gamma).
\]

Note that if $p_i^{-} /x_i - \gamma$ remains sufficiently positive (respectively, negative) for sufficient time, 
then the value $\xib$ over that timescale will exponentially increase (resp., decrease).
Thus,
while the ratio $\xit/\xib$ can remain in a positive bounded interval (i.e., bound by $(\ell,h)$ for $0<\ell<h<\infty$), 
both the numerator and denominator can either increase without bound 
(if $p_i^{-} /x_i - \gamma$ stays positive) or converge to $0$ 
(if $p_i^{-} /x_i - \gamma$ stays negative).
While in some cases it might be possible to tune the dynamics such that the system oscillates between these two regimes, in general this is difficult and the transcription factor concentrations $\xit$ and $\xib$ will tend to both exponentially increase or go to zero.
Thus we choose $\gamma > \max_i \sup_{t \ge 0} p_i^-(\vec{x}(t))/x_i(t).$
For this it suffices to choose $\gamma > C b_\mathrm{max}^k / b_\mathrm{min}$,
where $C$ is the maximum number of monomials in any Laurent polynomial ODE, and $k$ is the maximum degree; thus $C b_\mathrm{max}^k$ upper bounds the value $p_i^-(\vec{x}(t))$ for any $t$.

We now show a modified construction (see Algorithm~\ref{alg:construction}) that also satisfies the definition of stable ratio-implementation (\Cref{def:bounded-implementation}).
Recall that $\rho$ is bounded-above on all $x_i$ variables.
Recall that \Cref{def:bounded-implementation} requires that we show all $\xit,\xib$ transcription factor concentrations are bounded above,
and furthermore, all $\xib$ are also bounded-positive.

To fix this problem, we include a positive 
constant $\beta$ in the production rate of $\xib$ and change the transcriptional network to:

\begin{align*}
\frac{d \xit}{dt} 
&= \beta x_i + p_i^{+} \xib - \gamma \xit
\\
\frac{d \xib}{dt} 
&= \beta + p_i^{-} \xib/x_i - \gamma \xib 
\end{align*}


Similarly to \eqref{eq:x_bot},
\eqref{eq:bot-beta} is equivalent to writing $\xib' = \beta + p_i^{-} \xib^2/\xit - \gamma \xib$.
We now justify why we add the term $\beta x_i$ into the production rate of \eqref{eq:new-top}:
Since we are adding a term to \eqref{eq:x_bot} we must introduce an additive term, call it $\alpha$, 
to \eqref{eq:x_top} to cancel out the effect and make sure  $\frac{d(\xit/\xib)}{dt} = p_i^+ - p_i^-$. Below we show $\alpha = \beta x_i$.
\begin{eqnarray}
\frac{d \xit}{dt} &=& 
\alpha + p_i^{+} \xib - \gamma \xit
\label{eq:top-alpha}
\\
\frac{d \xib}{dt} &=& 
\beta + p_i^{-} \xib/x_i - \gamma \xib 
\label{eq:bot-beta}
\end{eqnarray}

\begin{align*}
    \frac{d(\xit/\xib)}{dt} &= \frac{\xit'\xib - \xit\xib'}{\xib^2} && \text{quotient rule}
    \\
    &= \frac{\xit'}{\xib} - \frac{\xit \xib'}{\xib^2}
    \\
    &= \frac{\alpha + p_i^{+} \xib - \gamma \xit}{\xib} - \frac{\xit(\beta + p_i^{-} \xib/x_i - \gamma \xib)}{\xib^2}
    && \text{substituting \eqref{eq:top-alpha}, \eqref{eq:bot-beta}}\\
    &=\frac{\alpha}{\xib} + p_i^{+} - \frac{\gamma \xit}{\xib} - p_i^{-} + \frac{\gamma \xit}{\xib} - \frac{\xit \beta}{\xib^2}
    \\
    &=
    \frac{\alpha \xib - \xit \beta}{\xib^2} + p_i^{+} - p_i^{-}
\end{align*}
For the above expression to equal $p_i^{+} - p_i^{-}$ as required,
we must have $\alpha \xib - \beta \xit = 0$, i.e., $\alpha = \beta \frac{\xit}{\xib} = \beta x_i$,
completing the justification for adding the term $\beta x_i$ in \eqref{eq:new-top}.

Now we explain why the addition of the $\beta$ term to $\xib$ prevents the problem identified above.
First, we claim that each $\xib$ term is bounded-positive.
Looking at \eqref{eq:new-bot},
we have two cases.
In the first case, 
if $\xib$ starts above $\beta / \gamma$, then $\xib$ cannot go under $\beta / \gamma$, 
since its production rate is at least $\beta$ and its degradation rate is $\gamma \xib$.
In the second case assume $\xib < \beta / \gamma$;
then $\xib' > 0$, i.e., it will increase towards $\beta / \gamma$. 
Thus $\xib$ is unconditionally bounded-positive.

We now want to establish that $\xib$ is also bounded above,
i.e., $\xib$ is bookended, provided that $\gamma$ is sufficiently large.
Write equation \eqref{eq:new-bot} as 
\[
\frac{d \xib}{dt} = \beta + \xib (p_i^{-} /x_i - \gamma) = \beta - b(t) \xib 
\]
where we define $b(t) = \gamma - p_i^{-} /x_i$.
If we ensure that $b(t) > b_\mathrm{min}$ for some $b_\mathrm{min} > 0$, 
then if ever 
$\xib > \beta / b_\mathrm{min} > \beta / b(t)$, then $\xib' < 0$, so it will decrease.
This
establishes that $\xib$ is bounded above.

It remains to establish that $b(t) > b_\mathrm{min}$ for some $b_\mathrm{min} > 0.$
We now do a case analysis.
For the first case, assume that all the variables in $\rho$ are bounded-above.
As long as $p_i^- / x_i$ is also bounded above,
i.e., for some constant $c$,
$p_i^- / x_i < c$,
then we can set $\gamma > c$ to ensure $b(t) = \gamma - p_i^-/x_i > \gamma - c = b_\mathrm{min} > 0.$

How can we ensure that $p_i^- / x_i$ is bounded above?
First, let us consider the case that the original system of polynomial ODEs ($P$) is in Hungarian form for $x_i$.
Then $p_i^{-} /x_i = q_i^-$ for some nonnegative polynomial $q_i^-$.
Since we have assumed that all the variables in $\rho$ are bounded-above,
$q_i^-$ is bounded-above as well, and we can set $\gamma$ larger than the 
maximum value ever taken on by any $q_i^-$ 
for $1 \leq i \leq n$.
This choice of $\gamma$ is sufficient to ensure that $b(t) > b_\mathrm{min}$.

On the other hand, consider the case that $P$ is not Hungarian form for $x_i$. 
The problem is that if $x_i$ approaches zero, $p_i^{-} /x_i$ may approach infinity since,
unlike the Hungarian case,
$x_i$ is not a factor in $p_i^-$ canceling $1/x_i$. 
Thus we explicitly disallow non-Hungarian variables to approach 0, 
as stated in the condition of the theorem.
Similarly to the above, all variables being bounded above implies $p_i^-$ is also bounded above.
Since $x_i$ cannot approach 0,
this means that $p_i^- / x_i$ is bounded above,
and we similarly choose $\gamma$ to be greater than its maximum value.

Note that because each $\xib$ is bookended, 
by our assumption that all $x_i$ are bounded-above,
\Cref{obs:bdd-x-xt} establishes that each $\xit$ is also bounded above,
obeying the constraints to make this a stable ratio-implementation by \Cref{def:bounded-implementation}.
\opt{sub,final}{\qed}
\end{proof}

We now argue that by slightly strengthening the conditions on the variables (requiring all variables, even Hungarian, not to start at 0 or approach 0)
we can assume the original ODE system has Laurent polynomials.

\begin{thmrep}
\label{thm:ODEtoTN_laurent}
Let $P$ be a system of Laurent polynomial ODEs over the set of variables $\calX$.
Then there is a transcriptional network $\calF(\gamma)$, constructed by Algorithm~\ref{alg:construction} with $p_i^+$ and $p_i^-$ being nonnegative Laurent polynomials,
such that for every upper-bound $b_\mathrm{max} > 0$
and every lower-bound $b_\mathrm{min} > 0$,
there is $\gamma > 0$ such that
for every trajectory $\rho$ of $P$ that is bookended by $b_\mathrm{max}$ and $b_\mathrm{min}$ on $\calX$,
$\calF(\gamma)$ stably ratio-implements $\rho$.
\end{thmrep}

\begin{proof}
    The proof is very similar to \Cref{thm:ODEtoTN},
    and the construction, \Cref{alg:construction}, is identical,
    since \Cref{alg:construction} expresses the rates as ratios of the polynomials $p^+$, $p^-$, and the transcription factor concentrations, which remain Laurent polynomials even if $p^+$, $p^-$ are Laurent polynomials to begin with.
    The extra hypothesis on variables not starting at or approaching 0 handles the fact that, for a Laurent monomial with a nontrivial denominator $\frac{1}{d}$, if $d$ is 0, then the monomial would be undefined.
\opt{sub,final}{\qed}
\end{proof}

\begin{remark}
Our main construction in \Cref{alg:construction} replaces each variable $x$ in the polynomial ODEs with a pair of transcription factors (variables in a new set of Laurent polynomial ODEs) $\tp{X},\bt{X}$, 
such that for all $t$, $x(t) = \frac{\tp{x}(t)}{\bt{x}(t)}$,
the so-called ``ratio representation'' of $x$.
It is also possible to inter-convert between these formats \emph{without} replacing variables.

We have $\tp{x}$ and $\bt{x}$ in a transcriptional network,
but there is no transcription factor that directly represents the value $x$.
Suppose we want a third transcription factor to \emph{directly} represent the original value $x$, in addition to representing $x$ indirectly via the ratio $\frac{\tp{x}}{\bt{x}}$.
Then we can introduce a new transcription factor $\hat X$ with ODE $\hat x' = \gamma(\frac{\tp{x}}{\bt{x}} - \hat x)$.
The larger is $\gamma$, the more closely $\hat x$ tracks $x = \frac{\tp{x}}{\bt{x}}.$
(Recall from the proof of \Cref{thm:ODEtoTN} that $\gamma$ must be bounded away from 0, but it does not need to be bounded above,
so we are free to set $\gamma$ arbitrarily large to make $\hat x$ track $x$ arbitrarily closely.)
Note that this remains a transcriptional network, since the positive term is a Laurent monomial and the negative term is $-\gamma \hat x$ as required.
We employ a similar trick in \Cref{sec:extremum}, introducing a transcription factor $x$ to track the value $\frac{\tp{z}}{\bt{z}} + a \frac{\tp{p}}{\bt{p}}.$

Conversely, if we have a single variable $x$ in a transcriptional network, we can let 
$\tp{x}(0)=x(0)$,
$\bt{x}(0)=1$,
with
$\tp{x}' = x'$
and
$\bt{x}' = \gamma-\gamma \bt{x}$,
which maintains the stronger condition that $x(t) = \frac{\tp{x}(t)}{\bt{x}(t)}$ for all $t$
(i.e., $\frac{\tp{x}}{\bt{x}}$ ``tracks'' $x$ with zero lag).
\end{remark}

\begin{remark}
\label{rem:eqs-need-simplification}
When applying the construction in \cref{alg:construction}, algebraic simplification should be performed on the right-hand side of \eqref{eq:new-top} and \eqref{eq:new-bot}, reducing all rational expressions to lowest terms. This prevents having redundant fractions such as $\frac{\xib}{\xib}$, which would imply that $\xib$ acts simultaneously as both an activator and a repressor. 
Moreover, when a variable $x_i$ is permitted to approach 0 (i.e., when $x_i$ is Hungarian; see \Cref{thm:ODEtoTN}), this simplification is essential as shown in the next remark.
\end{remark}

\begin{remark}
\label{rem:repressors-do-not-approach-zero}
Note that a Hungarian form polynomial ODE can be written 
$x_i' = p_i^+ - x_i q_i^-$,
where $p_i^+$ and $q_i^-$ are arbitrary nonnegative polynomials in $x_1,\dots,x_n$
($q_i^-$ being $p_i^-$ with the common $x_i$ factored out.)
In this case equation \eqref{eq:new-bot} simplifies to
$\xib' = \beta + q_i^- \xib - \gamma \xib$.
Note that $\xit$ no longer appears in the denominator; consequently, even when $x_i$ (and hence $\xit$) approaches 0, 
$\xit$ is not a vanishing repressor in the actual transcriptional network, preserving the Hill-function regime of \Cref{sec:justification-laurent} (see \Cref{rem:proprtional}).
\end{remark}

\section{Simplification of transcriptional network limiting activators and repressors per promoter}
\label{sec:simplification}

\begin{toappendix}
\label{apx:simplification}
\end{toappendix}







\todo{DD: implement this in ode2tn}

The model of Laurent monomials to describe the effect of these on production rate may break down with a large number of activators and repressors on a single promoter site (see \Cref{sec:justification-laurent}).
Thus a disadvantage of the construction of \Cref{alg:construction} is that each transcription factor promoter can end up regulated by many activators and many repressors, corresponding to having many terms in the numerator and/or denominator of a single Laurent monomial,
possibly pushing the model too far.
To mitigate this concern,
in this section we describe a way to simplify the construction such that each promoter is controlled by at most one repressor and at most two activators, i.e., each Laurent monomial will be of the form $\alpha \frac{a_1 a_2}{r}$ for two activators $a_1,a_2$ and a repressor $r$
(possibly some of $a_1, a_2, r$ may be missing).

We use a technique motivated by the methods of polynomial 
quadratization~\cite{bychkov2021optimal}---converting polynomial ODEs to have degree at most two.
Importantly, however, we cannot simply compose the existing quadratization constructions with ours.
Indeed,
simply running the quadratized polynomial ODEs through our construction,
would result in Laurent monomials of numerator degree $4$ and denominator degree $3$ (e.g., if the $p_i^-$ term is the degree $2$ monomial $x_j x_k$, then the middle monomial of \eqref{eq:new-bot} would be 
$
\frac{\tp{x_j}}{\bt{x_j}}
\frac{\tp{x_k}}{\bt{x_k}}
\frac{\bt{x_i}^2}{\tp{x_i}}
$).
Instead, we must carefully set up the initial polynomial ODE transformation to ensure cancellation of terms within the construction of \Cref{alg:construction} will limit the degree of the numerator to at most $2$ and denominator to at most $1$.

We note that the conditions required on the ODEs of \Cref{thm:simplification} are more stringent than those of \Cref{thm:ODEtoTN}, requiring all variables to start positive and stay bounded away from 0 
(bounded-positive, \Cref{def:bounded-positive}).
\opt{full}{This avoids some divide-by-zero problems in the construction.}
\opt{sub,final}{
This avoids some divide-by-zero problems in the construction, as shown in the full proof given in \Cref{apx:simplification}.
}



\begin{thmrep}
\label{thm:simplification}
Let $P$ be a system of polynomial ODEs over the set of variables $\calX$. 
Then there is a transcriptional network $\calF(\gamma)$, 
whose every positive term is a Laurent monomial of the form
    $\alpha \frac{a_1 a_2}{r}$,
    where each $a_1,a_2,r$ is a transcription factor or the constant $1$.
such that for every upper-bound $b_\mathrm{max} > 0$
and every lower-bound $b_\mathrm{min} > 0$,
there is $\gamma > 0$, such that  
    for every trajectory $\rho$ of $P$ that is bookended by $b_\mathrm{max}$ and $b_\mathrm{min}$ on $\calX$,
    $\calF(\gamma)$ stably ratio-implements $\rho$.
\end{thmrep}

\begin{proof}
Here we take advantage of the fact that \Cref{alg:construction} actually can process arbitrary Laurent polynomials (not merely polynomials) (see \Cref{thm:ODEtoTN_laurent}), 
since we first modify $P$ to a system of ODEs with Laurent polynomial right-hand sides.

Consider a polynomial ODE system 
\begin{equation}
\label{eq:poly-ode-xi-to-simplify}
\frac{d x_i}{dt}=p_i(x_1,\dots,x_n)
= \sum_{j=1}^{k_i} \alpha_{i,j} M_{i,j}
\end{equation}
where each $M_{i,j}$ is a monomial in the variables $x_1,\dots,x_n$.
For all monomials $M_{i,j}$,
we introduce new variables $z_{i,j}$ and set their ODE to ensure that at all times $t \geq 0$,
\begin{align}
\label{eq:define-z}
z_{i,j}(t)=\frac{x_i(t)}{M_{i,j}(t)}.
\end{align}
Since each $x_i(0) > 0$, we have $M_{i,j}(0)>0$ and $z_{i,j}(0)>0$.
Then 
\eqref{eq:poly-ode-xi-to-simplify} can be rewritten as
\begin{align}
\frac{d x_i}{dt}
&=
\sum_{j=1}^{k_i}\alpha_{i,j}\frac{x_i}{z_{i,j}}
=
x_i \sum_{j=1}^{k_i} \frac{\alpha_{i,j}}{z_{i,j}}.
\label{eq:x-system}
\end{align}

We now derive the ODE equations for $z_{i,j}$ via the logarithmic derivative:
for any positive differentiable function $f$, the quantity
$\dot f/f = \frac{d}{dt} \ln f$ converts products and quotients into sums and
differences.
Write $M_{i,j} = x_{1}^{e_1}\cdots x_{n}^{e_n}$.
Then applying the identity $\dot f/f = \frac{d}{dt} \ln f$ a few times, 
\begin{align}
\frac{\dot z_{i,j}}{z_{i,j}}
&
= \frac{d}{dt} \ln z_{i,j}
= \frac{d}{dt}\ln \frac{x_i}{M_{i,j}}
= \frac{d}{dt} \ln x_i - \frac{d}{dt} \ln M_{i,j}
= \frac{\dot x_i}{x_i} - \frac{d}{dt} \ln M_{i,j}
\nonumber
\\&
= \frac{\dot x_i}{x_i} - \frac{d}{dt} \ln (x_{1}^{e_1}\cdots x_{n}^{e_n})
= \frac{\dot x_i}{x_i} - \frac{d}{dt} \sum_{\ell=1}^n e_\ell \ln x_{\ell}
= \frac{\dot x_i}{x_i} - \sum_{\ell=1}^n e_\ell \frac{d}{dt} \ln x_{\ell}
\nonumber
\\&
= \frac{\dot x_i}{x_i} - \sum_{\ell=1}^n e_\ell \frac{\dot x_{\ell}}{x_{\ell}}.
\label{eq:zdot_over_z}
\end{align}

Dividing \eqref{eq:x-system} by $x_i$ 
(and changing $j$ to $m$) shows that
$\frac{\dot x_i}{x_i} 
= \sum_{m=1}^{k_i} \frac{\alpha_{i,m}}{z_{i,m}}$, and likewise for each
$x_{\ell}$.
Substituting this into the right side of \eqref{eq:zdot_over_z} and multiplying through by $z_{i,j}$ yields
\begin{align}
\frac{d z_{i,j}}{dt}
&=z_{i,j}\left[
 \sum_{m=1}^{k_i} \frac{\alpha_{i,m}}{z_{i,m}}
 - \sum_{\ell=1}^n e_\ell
  \sum_{m=1}^{k_{\ell}} 
  \frac{\alpha_{\ell,m}}{z_{\ell,m}}
\right],
\label{eq:z-system}
\end{align}
Since all variables are bookended in the trajectory by assumption, equation \eqref{eq:define-z} implies that $z_{i,j}$ is bookended as well.
Importantly, each term of $\dot z_{i,j}$ in \eqref{eq:z-system} is simply $\lambda \frac{z_{i,j}}{z_{\ell,m}}$ for some constant coefficient $\lambda$;
this will be key in transforming this ODE system to a transcriptional network with a limited number of activators and repressors.

If the initial conditions satisfy $z_{i,j}(0)=x_i(0)/M_{i,j}(0)$ for all $i,j$,
then these equalities are preserved for all time:
the derivation above shows that $x_i/M_{i,j}$ satisfies the same ODE as
$z_{i,j}$, so uniqueness of solutions implies
$z_{i,j}(t)=x_i(t)/M_{i,j}(t)$ for all $t$.
Hence along every such trajectory $M_{i,j}=x_i/z_{i,j}$,
and \eqref{eq:x-system} reproduces the original polynomial system exactly.


Now collect all variables $x_i$ and $z_{i,j}$ into a single list
\[
y_1,\dots,y_N.
\]

Then the system described by \eqref{eq:x-system} and \eqref{eq:z-system} can be written in the form
\begin{align}
\frac{d y_a}{dt}
&=
\sum_{b=1}^N \lambda_{a,b}\frac{y_a}{y_b},
\qquad a=1,\dots,N,
\label{eq:y-system}
\end{align}
for suitable real constants $\lambda_{a,b}$.

For each $a,b$, define
\[
\lambda_{a,b}^+ = \max\{\lambda_{a,b},0\},
\qquad
\lambda_{a,b}^- = \max\{-\lambda_{a,b},0\}.
\]
Then $\lambda_{a,b} = \lambda_{a,b}^+ - \lambda_{a,b}^-$, where
$\lambda_{a,b}^+,\lambda_{a,b}^- \ge 0$, and we have

\begin{align}
\frac{d y_a}{dt}
&=
\sum_{b=1}^N \lambda_{a,b}^+\frac{y_a}{y_b}
-
\sum_{b=1}^N \lambda_{a,b}^-\frac{y_a}{y_b}.
\label{eq:y-system-split}
\end{align}

For each variable $y_a$, include two transcription factors $Y_a^\top$ and $Y_a^\bot$
with positive initial concentration such that
$
y_a=\frac{Y_a^\top}{Y_a^\bot}.
$
Construct the transcriptional network from \eqref{eq:y-system-split} by applying \Cref{alg:construction}:
\begin{align}
\frac{d Y_a^\top}{dt}
&=
\beta \frac{Y_a^\top}{Y_a^\bot}
+
\sum_{b=1}^N \lambda_{a,b}^+\frac{Y_a^\top Y_b^\bot}{Y_b^\top}
-
\gamma Y_a^\top
\label{eq:Y-top}
\\[1ex]
\frac{d Y_a^\bot}{dt}
&=
\beta
+
\sum_{b=1}^N \lambda_{a,b}^-\frac{Y_a^\bot Y_b^\bot}{Y_b^\top}
-
\gamma Y_a^\bot.
\label{eq:Y-bot}
\end{align}


Here $\beta$ is any positive constant, and $\gamma$ is a positive constant chosen once for the whole network and trajectory as in \Cref{alg:construction}.
Every positive term appearing in \eqref{eq:Y-top}--\eqref{eq:Y-bot} is of one of the forms
\[
\sigma,\qquad \sigma u,\qquad \frac{\sigma u}{v},\qquad \frac{\sigma uv}{w},
\]
for constant $\sigma$ and transcription factor concentrations $u,v,w$,
completing the construction.
\opt{sub,final}{\qed}
\end{proof} 

\section{Examples}
\label{sec:examples}

We illustrate our main construction 
(\Cref{eq:new-top,eq:new-bot}) from \Cref{alg:construction} on several example systems of polynomial ODEs. Code producing the plots is available at \url{https://github.com/UC-Davis-molecular-computing/ode2tn/blob/main/notebook.ipynb}.
\opt{sub,final}{
    Additional examples are shown in \Cref{sec:app-examples}.
}

Unless explained otherwise, the plots of variables such as $x$ are showing the value $\frac{\tp{x}}{\bt{x}}$ over time for the underlying transcription factors $\tp{x}$ and $\bt{x}$.

\subsection{Sine-cosine oscillator}
\label{sec:sin-cosine-example}

\begin{figure}
\centering
\includegraphics[width=1\linewidth]{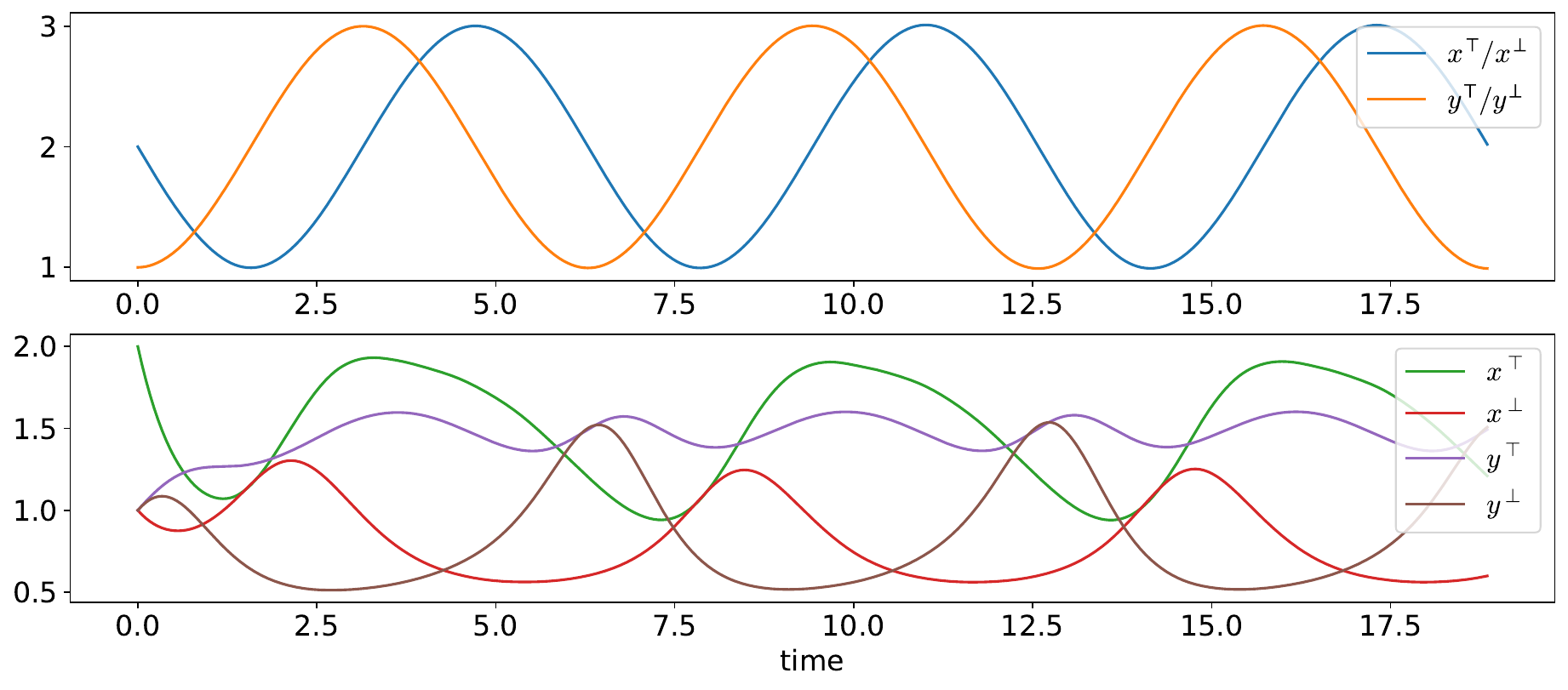}
\caption{
    Plot of ``shifted positive'' sine-cosine oscillator implemented with a transcriptional network.
    There are two subplots: the first shows the ``original'' variables $x,y$ (i.e., the computed ratios $\tp{x}/\bt{x}$ and $\tp{y}/\bt{y}$).
    The second subplot shows underlying transcription factor variables 
    $\tp{x},\bt{x},\tp{y},\bt{y}$.
}
\label{fig:sine-cosine-plot}
\end{figure}

The sine-cosine oscillator is the system
\begin{align*}
x' &= y
\\
y' &= -x
\end{align*}
since we can write $x' = \sin(t)' = \cos(t) = y$ and $y' = \cos(t)' = - \sin(t) = -x.$ 
However, this takes negative values, whereas transcription factor concentrations must be nonnegative.
This could be solved with the so-called \emph{dual-rail} representation~\cite{chen2023rate} where negative values $x$ can be represented as the difference of two positive values $x = x^+ - x^-$.
However, it is simpler in this case simply to ``shift'' the oscillator upwards to stay positive.
The following system, starting with $x=2,y=1$ (i.e., initial conditions where each variable starts 2 higher than originally),
will oscillate between a maximum amplitude of 3 and a minimum amplitude of 1;
see \Cref{fig:sine-cosine-plot}.
It works by replacing variable $v$ by $v-2$ wherever it appears in the ODEs.
\begin{align*}
x' &= y-2
\\
y' &= -x+2
\end{align*}
Recall we must choose a constant $\gamma$ to apply the construction of \Cref{alg:construction}, which must be greater than the maximum value taken by any $p_i^- / x_i$.
Here we think of $x_1 = x$ and $x_2 = y$.
In this case, $p_1^- = 2$ (negative term for $x'$) and $p_2^- = x$ (negative term for $y'$).
We have $2/x \le 2$ 
(the case $i=1$) and 
it turns out that 
$x/y < 2.5$
for all values in the trajectory of this oscillator.
Thus choosing $\gamma = 2.5$ suffices to ensure all transcription factors stay bounded.
In subsequent examples we omit a detailed analysis and simply pick a $\gamma$ that appears empirically to keep the transcription factors bounded;
note that the argument of the proof of \Cref{thm:ODEtoTN} gives a sufficient but not necessary condition for choosing $\gamma$;
for example,
although $x/y$ does sometimes exceed 2,
it appears in practice that setting $\gamma=2$ also keeps the transcription factors bounded.

After applying the construction of \Cref{alg:construction},
i.e., using \Cref{eq:new-top,eq:new-bot} to generate the ODEs for $\tp{x}', \bt{x}', \tp{y}',$ and $\bt{y}'$ based on the ODEs for $x'$ and $y'$ above (where we can take $x$ above to be $x_1$, and take $y$ above to be $x_2$, in the proof of \Cref{thm:ODEtoTN}),
this generates the transcriptional network:
\begin{align*}
\tp{x}' &= \frac{\tp{x}}{\bt{x}} + \frac{\bt{x}\tp{y}}{\bt{y}} - 2.5\tp{x}
&
\bt{x}' &= 1 + 2\frac{\bt{x}^2}{\tp{x}} - 2.5\bt{x} 
\\
\tp{y}' &= \frac{\tp{y}}{\bt{y}} + 2\bt{y} - 2.5\tp{y} 
&
\bt{y}' &= 1 + \frac{\tp{x}\bt{y}^2}{\bt{x}\tp{y}} - 2.5\bt{y} 
\end{align*}
and initial values
$
\tp{x}(0) = 2,
\quad
\bt{x}(0) = 1,
\quad
\tp{y}(0) = 1,
\quad
\bt{y}(0) = 1.
$

\begin{toappendix}
\label{sec:app-examples}
For subsequent examples, we will not show the values of the underlying $\tp{v},\bt{v}$ variables,
but for the sine-cosine system, they can also be seen in \Cref{fig:sine-cosine-plot}.

\subsection{Bubble sort}

\begin{figure}[t]
\centering
\includegraphics[width=\linewidth]{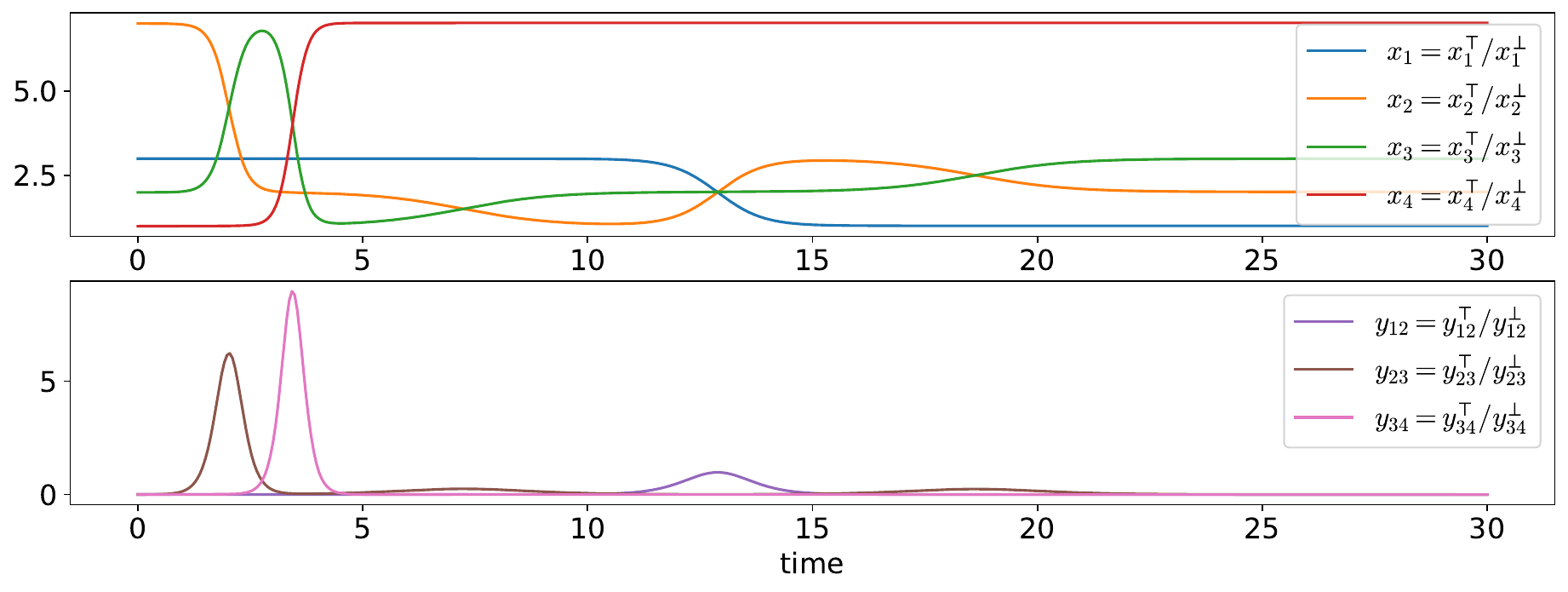}
\caption{Bubble sort system with four variables $x_1 = 3$, $x_2 = 7$, $x_3 = 2$, $x_4 = 1$ and we start $y_{1,2} = y_{2,3} = y_{3,4} = 0.01$.
The $y_{i,i+1}$ value spikes when it is swapping $x_i$ and $x_{i+1}$.
(Though there are conditions under which the plot does not look like discrete swaps separated in time.)
As in \Cref{fig:sine-cosine-plot}, each variable is the computed ratio of two transcription factors.
}
\label{fig:bubble_sort}
\end{figure}

This is a system introduced by Paul and H\"{u}per~\cite{paul1993analog}, which implements a sorting algorithm reminiscent of the discrete bubble sort.
The goal is to sort the values $x_1,\dots,x_n$.
Variables $y_{i,i+1}$ are intended to swap the values of $x_i$ and $x_{i+1}$ if they are out of order.\footnote{
    That said, this is not exactly an implementation of bubble sort, since swaps do not occur in discrete sequential steps.
    For instance, if started with values in reverse order such as $x_1 = 10, x_2 = 9, \dots, x_{10} = 1$,
    the ODEs will change all values in what appears to be one ``step''.
}


The ODEs for this system are
\begin{eqnarray*}
x_1' &=& - y_{1,2} \\
x_2' &=& y_{1,2} - y_{2,3} \\
x_3' &=& y_{2,3} - y_{3,4} \\
&\vdots&\\
x_{n-1}' &=& y_{n-2,n-1} - y_{n-1,n} \\
x_n' &=& y_{n-1,n}\\
y_{1,2}' &=& y_{1,2}(x_1 - x_2)\\
y_{2,3}' &=& y_{2,3}(x_2 - x_3)\\
&\vdots&\\
y_{n-1,n}' &=& y_{n-1,n}(x_{n-1} - x_n)
\end{eqnarray*}

The initial values of the $x_i$'s represent the values that need to be sorted, and the initial values of all $y_{i,i+1}$'s are $\varepsilon > 0$. Since these ODE's are not in Hungarian form, we assume that $x_i$'s have  positive initial values.  
\cref{fig:bubble_sort} shows an example after we apply the construction given in  \Cref{alg:construction}.





\subsection{Schl\"{o}gl system}
\label{sec:schlogl}

\begin{figure}
\centering
\includegraphics[width=0.65\linewidth]{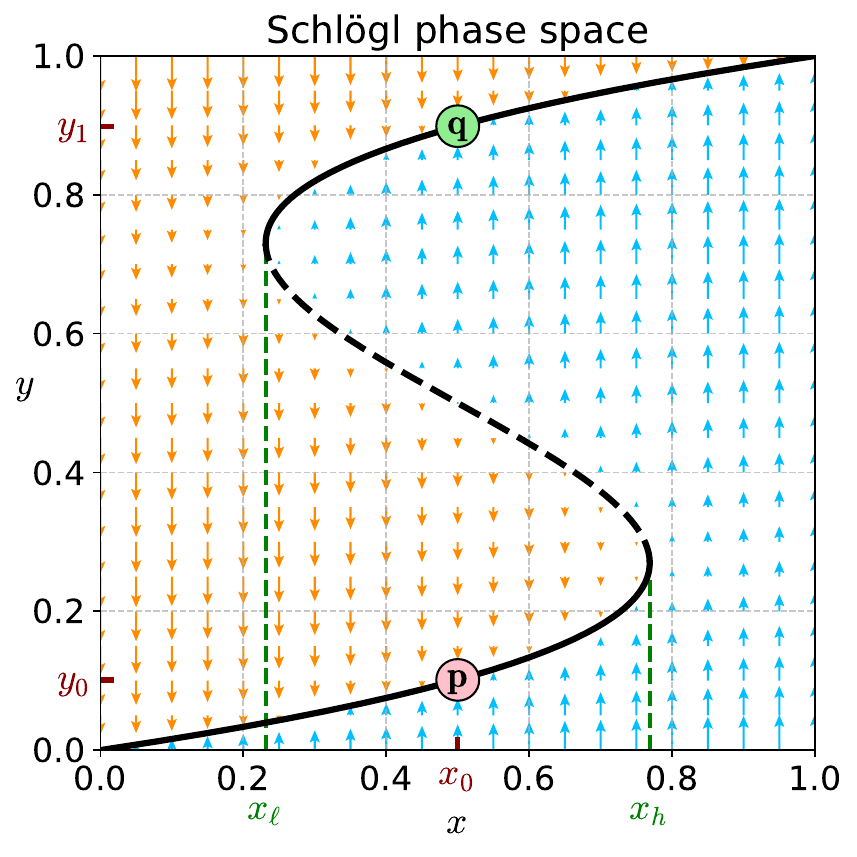}
\vspace{-0.4cm}
\caption{\footnotesize
    Phase-space diagram of the Schl\"{o}gl CRN of \Cref{sec:schlogl}.
    The black curve is the plot of the function 
    $x = f(y) = 11 y^3 - 16.5 y^2 + 6.5 y,$
    mirrored around the line $y=x.$
    No reaction changes $X$; we think of $X$ as ``externally controlled'' as a way to influence $Y$ through the reactions. 
    For any point $(x,y)$ in the plane, the vector arrows show the direction of $y'$, pushing $Y$ up if to the right of the curve (blue arrows) and down otherwise (orange arrows).
    Imagine for example we start at point $\vp = (x_0,y_0)$.
    If we then move $X$ above the value $x_h$, $Y$ converges to the upper solid part of the black curve.
    If we subsequently lower the value of $X$ to its original coordinate $x_0$, $Y$ will converge at a different point $\vq = (x_0,y_1)$ above $\vp$, remembering that we changed $X$,
    despite returning $X$ to its original value.
    Symmetrically, if we lower the value of $X$ below the threshold $x_\ell$, 
    and raise it back again to its original value, we restore the system back to point $\vp$.
    This dynamic process is shown in \Cref{fig:schlogl-plot}.
    The dashed portion of the curve shows unstable fixed points of the CRN; the solid portions are stable fixed points.
}
\label{fig:schlogl-phase-space}
\end{figure}

\begin{figure}
\centering
\includegraphics[width=1\linewidth]{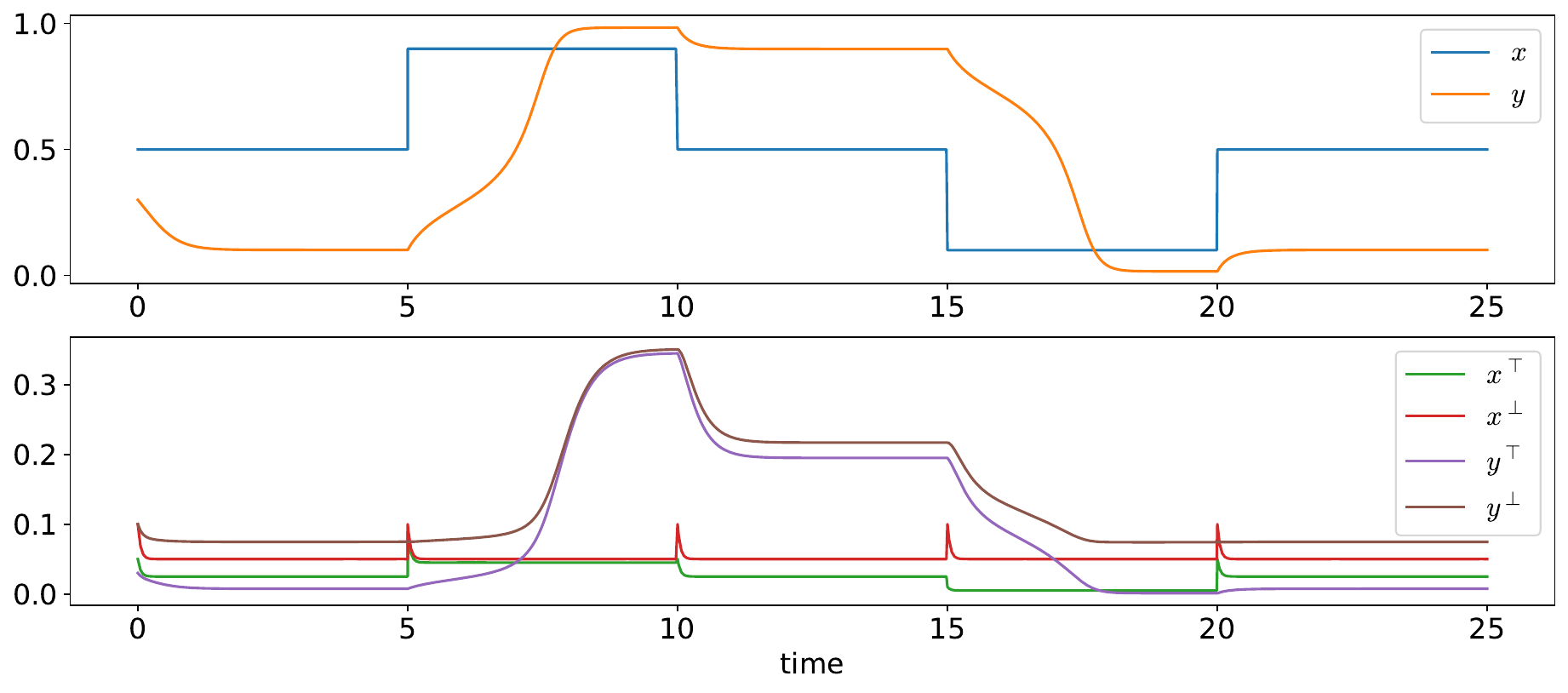}
\vspace{-0.5cm}
\caption{
\footnotesize
    Plot of Schl\"{o}gl system with parameters $k_1=11, k_2=16.5, k_3=6.5$, with ``resets'' of the $x$ variable.
    $x$ starts at 0.5 ($x_0$ in \Cref{fig:schlogl-phase-space}).
    $y$ starts ``small'', which makes $y$ converge to $\approx 0.1$ ($y_0$ in \Cref{fig:schlogl-phase-space}).
    At time 5 we set $x$ ``large'' ($=0.9$, by setting $\tp{x}=0.09$ and $\bt{x}=0.1$), and $y$ follows (reaching $y\approx 0.984$ such that $f(y)=0.9$).
    We then set $x$ back to $0.5$ at time 10, which causes $y$ to converge to $y_1 \approx 0.9$.
    We then set $x$ below $x_\ell$ at time 15, then raise it back to 0.5 for a second time at time 20, and the system remembers that we previously made $x$ low, 
    returning $y$ to low ($\approx 0.1$). In the top plot, $x$ and $y$ are the computed ratios of the transcription factors.
}
\label{fig:schlogl-plot}
\end{figure}

The Schl\"{o}gl chemical reaction network~\cite{schlogl1972chemical} is given by the reactions
\begin{align*}
X &\rxn^{1} X+Y
\\
3Y &\revrxn^{k_1}_{k_2} 2Y
\\
Y &\rxn^{k_3} \emptyset
\end{align*}
Note: this network is presented differently in~\cite{schlogl1972chemical}.
There, our species $Y$ is called $X$, and our species $X$ is called $C$.
Also, unit rate constants are assumed, so the rate constants $k_1,k_2,k_3$ above are imitated by extra reactants that are assumed to be held at a fixed concentration despite being consumed in each reaction.

This corresponds to the ODE
$
y' = x - (k_1 y^3 - k_2 y^2 + k_3 y).
$
In our plots, $k_1=11, k_2=16.5, k_3=6.5$.
Note that the system does not change $x$.
One thinks of this system as implementing a ``1-bit memory'' (a.k.a.\ hysteresis) in the sense that if we imagine an external controller altering $x$, say by raising it and then lowering it back to its original value, then the value of $y$ converges to a different value than it was at before altering $x$.
This can be seen by inspecting the phase-space diagram in \Cref{fig:schlogl-phase-space}.
\Cref{fig:schlogl-plot} shows an example where we change the value of $x$ in a sequence where the new value of $y$ depends on whether $y$ was previously large (close to 1) or previously small (close to 0).

\subsection{Willamowski-R\"{o}ssler  system (chaotic)}\label{sec:chaotic}

\begin{figure}
\centering
\includegraphics[width=1\linewidth]{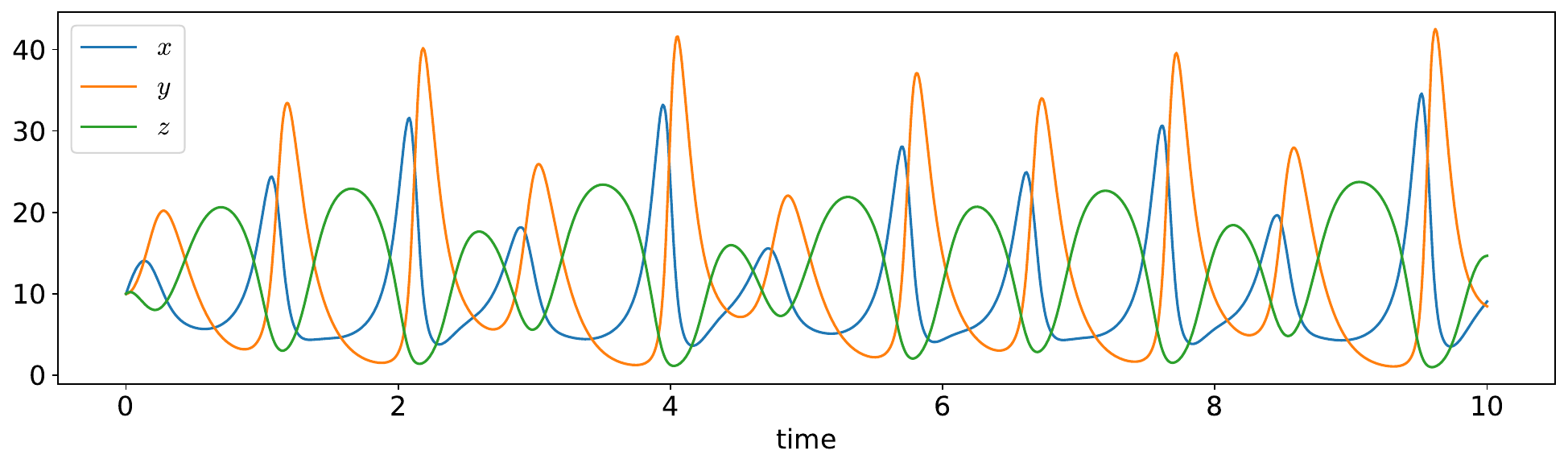}
\caption{
    Willamowski-R\"{o}ssler chemical reaction network with rate constants given in \cref{sec:chaotic} and initial concentrations $x(0)=10$, $y(0)=10$ and $z(0)=10$. $x$, $y$, and $z$ are the computed ratios of their respective transcription factors.
}
\label{fig:rossler-plot}
\end{figure}

The Willamowski-R\"{o}ssler system \cite{willamowski1980irregular} is a nonlinear dynamical system derived from the reactions
\begin{align*}
X   &\revrxn^{30}_{0.5} 2X
\\
X+Y &\rxn^1 2Y
\\
Y   &\rxn^{10} \emptyset
\\
X+Z &\rxn^1 \emptyset
\\
Z   &\revrxn^{16.5}_{0.5} 2Z
\end{align*}
This system exhibits chaotic behavior, meaning its trajectories are highly sensitive to initial conditions and can show unpredictable yet structured patterns over time.
See \Cref{fig:rossler-plot}.

\subsection{PID controller}

A proportional–integral–derivative (PID) controller is a feedback-based mechanism widely used to maintain continuous control and automatic adjustments in machines and processes. 
The PID controller consists of three components: proportional (P), integral (I), and derivative (D) control. As a simple example, consider a thermostat, which demonstrates how the PID controller uses these three elements to maintain the temperature at the desired set point.
Biomolecular integral and PID controllers are an active area in synthetic biology and molecular control~\cite{briat2016antithetic,aoki2019universal,filo2022hierarchy}.

\begin{figure}
\centering
\includegraphics[width=1\linewidth]{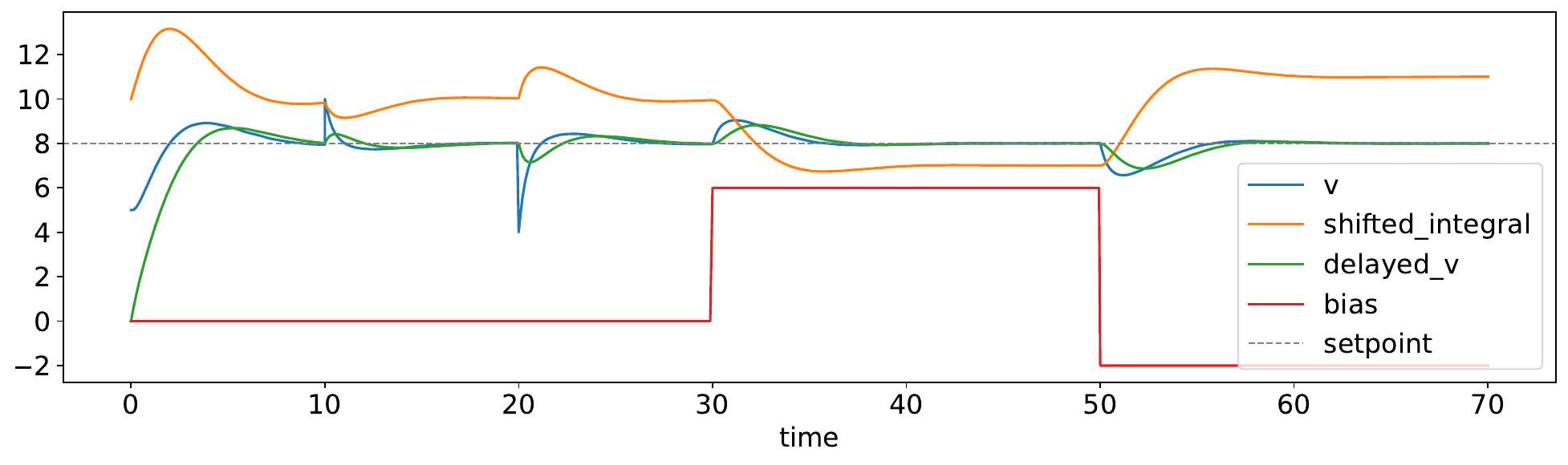}
\caption{
    PID controller, attempting to keep $v$ at a setpoint of 8 (called $\sigma$ in the main text), in response to external disturbances,
    with PID coefficients $P=1.5, I=1, D=1$.
    At time 10, we perturb $v$, increasing it to 10 (by setting $\tp{v}=10$ and $\bt{v}=1$), and at time 20, we perturb $v$ again, similarly decreasing it to 4,
    each representing an ephemeral disturbance to the system.
    At time 30, we perturb $v$ in a more permanent way, by adding a constant ``bias'' term of 6 to its ODE,
    which remains in $v$'s ODE for all times $30 \le t \le 50$,
    at which point we shift the bias to $-2$.
}
\label{fig:pid-plot}
\end{figure}

The controller first considers the difference (error) between the setpoint and the measured temperature. The output signal of this part is proportional to the size of the error.
The integral term sums the error over time, addressing steady-state errors that proportional control alone cannot resolve. For example, when the temperature consistently remains below the setpoint, the integral term increases the output to fix this problem.
The derivative term considers the rate of change of the error. This helps predict temperature trends and adjust the output signal to avoid overshoot. See \cref{fig:pid-plot}.

There are coefficients $P,I,D$ corresponding to the proportional, integral, and derivative terms.
Let the setpoint be $\sigma$ (a constant),
and let $v$ be the value we want to maintain at the setpoint.
The variable $i$ represents the integral of the error shifted up by the constant $\mu$ so that it stays non-negative; when we need the integral itself, we use the value $i-\mu$.
The derivative term is represented as follows.
First, $d$ represents a ``delayed $v$'', lagging behind $v$ by some amount controlled by the constant $\lambda$.
Therefore, the expression $v - d$ is an approximation of the derivative of $v$,
and $d-v$ represents the derivative of the error $\frac{d}{dt} (\sigma - v) = -\frac{dv}{dt}$.
The ODEs to achieve these are
\begin{align*}
v' &= P(\sigma-v) + I(i-\mu) + D(d - v)
\\
i' &= \sigma - v
\\
d' &= \lambda (v - d)
\end{align*}
We can start with $d(0)=0$, $v(0)$ being any positive value, and $i(0)=\mu$.

\subsection{Extremum seeking feedback scheme}
\label{sec:extremum}

This system originates with~\cite{krstic2000stability}.
This is an ``extremum-seeking'' system.
There is some ``objective function'' $f$ assumed to be unknown to us.
The goal is to adjust the value of a variable in the ODEs until it finds a local maximum of $f$.
\Cref{fig:extremum-seek-local-max-plot}
shows a function $f(x)$ with two local maxima.
The goal of the system is, starting with a variable $x$,
for $x$ to hill-climb to the nearest local maximum of $f$.

In our setting,
we think of $f$ as an unknown process that, based on the current concentration of some transcription factor $X$, produces a transcription factor $F$ characterized by some unknown function $f$.
In other words, if $X$ has concentration $x$, $F$ has concentration $f(x)$.
We imagine $F$ is some natural reagent
whose production we wish to maximize.
Our goal is to maximize the concentration of $F$ by adjusting $X$, without knowing how $F$ depends on $X$.
(i.e., the standard black-box optimization problem, but the computation of the optimization must be done by ODEs rather than by a standard algorithm).

\begin{figure}
\centering
\includegraphics[width=1\linewidth]{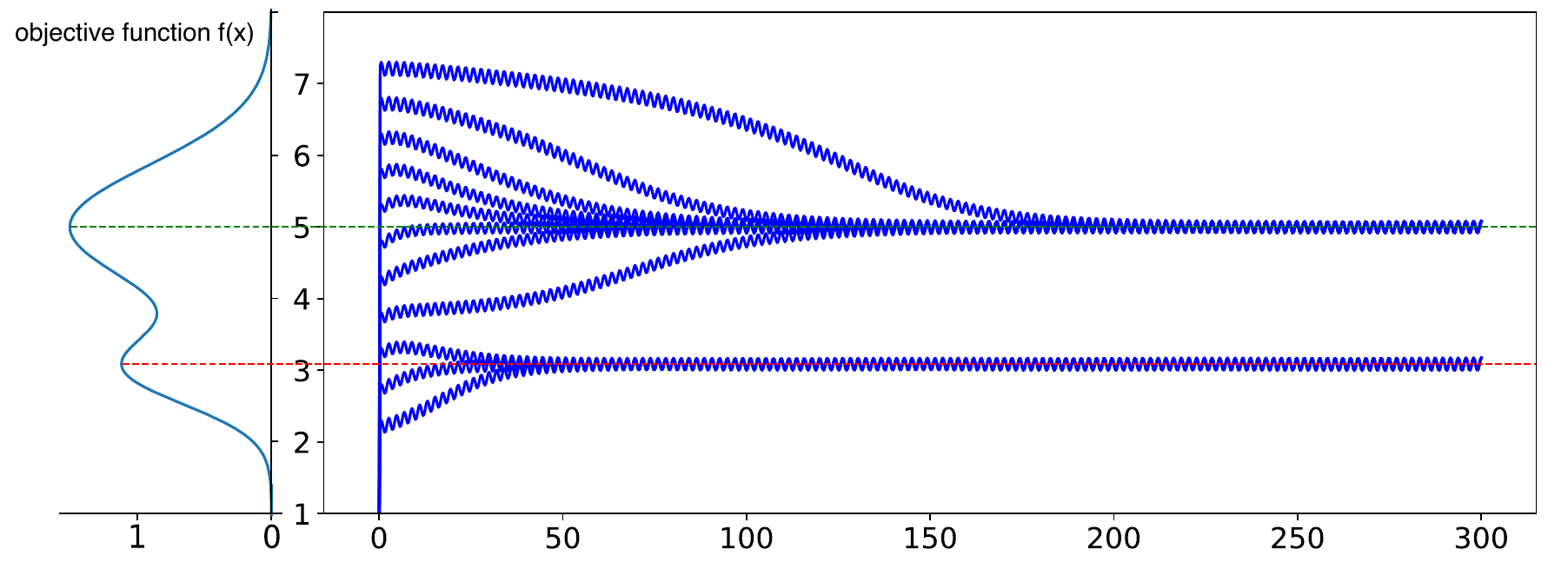}
\caption{
    Extremum seeking feedback scheme.
    The objective function is $f(x) = e^{-2(x-3)^2} + e^{-2(x-5)^2/3},$
    with local maxima at $x \approx 3.08$  and $x \approx 5$.
    ($f$'s graph shown rotated 90 degrees counterclockwise)
    This plot shows transcription factor $x$'s trajectory with various initial values between 2.5 and 7.5.
    (Outside of that interval, the slope of $f$ is sufficiently close to 0 to severely delay the time required for $x$ to find the maxima.)
}
\label{fig:extremum-seek-local-max-plot}
\end{figure}

The following ODEs achieve this by using a sine-cosine oscillator (on variables $p$ and $q$) to move $z+ap$ up or down from the current value of $z$,
with constants 
$\omega=3$ (period of oscillation), $\lambda=0.3$, 
$a=0.1$ (amplitude of oscillation),
$k=0.15$ (rate of convergence).
$x'$ is set so that $x$ tracks 
$z+ap$.
Intuitively, this works in the following way:
$p$ goes up and down from $+1$ to $-1$. 
When $p$ is positive, the $ap$ term in $x'$ drags $x$ up slightly 
(say to $x_\mathrm{h}$), and when $p$ is negative, the $ap$ term drags $x$ down slightly
(say to $x_\mathrm{l} < x_\mathrm{h}$).
If $f'(x) > 0$ (i.e., $f$ has positive slope, so we should increase $x$ to find the local maximum to the right),
then this means that $f(x_\mathrm{h})p$ when $p=1$ (the positive term in $w'$) will have larger magnitude than $|f(x_\mathrm{l})p|$ when $p$ is $-1$.
Thus, since the push to increase $w$ when $p$ is positive is stronger than the push to decrease $w$ when $p$ is negative, the net effect pushes $w$ higher.
Symmetry logic shows that when $f'(x) < 0$ (the slope is negative, so we should \emph{decrease} $x$ to reach the local maximum to the left),
the net effect will be to push $w$ lower.
\begin{align*}
p' &= \omega q
\\
q' &= -\omega p
\\
w' &= -\lambda w + f(x)p
\\
z' &= k w
\\
x' &= \gamma (z+ap - x)
\end{align*}
with initial values $p=0, q=1, w=0, x=0$,
and the initial value of $z$ represents where to start the search for a local maximum.

However, as with our simpler example with just the sine-cosine oscillator in \Cref{sec:sin-cosine-example},
\Cref{thm:ODEtoTN} only allows us to implement nonnegative trajectories.
We follow the same approach as \Cref{sec:sin-cosine-example},
shifting the oscillator implemented by $p$ and $q$ (as well as $w$, which also goes negative in the above system) up by 2 to keep it nonnegative, oscillating between a peak of 3 and a trough of 1, by shifting the initial values of $p,q,w$ up by 2.
Then, when we want to reference the original (possibly negative) value of $p$, $q$, or $w$, we reference the value minus 2.
We also similarly shift $z$ down by $2a$ (explained below):
\begin{align*}
p' &= \omega (q-2)
\\
q' &= -\omega (p-2)
\\
w' &= -\lambda (w-2) + f(x)(p-2)
\\
z' &= k (w-2)
\\
x' &= \gamma (z + 2a + a(p-2) - x) 
= \gamma (z + ap - x)
\end{align*}
with initial values $p=2, q=3, w=1, x=0$, and $z=$ the desired starting value for the search minus $ap$.

We shift $z$ down by $2a$ for the following reason.
We will apply the construction of \Cref{thm:ODEtoTN} to the variables $p,q,w,z$, but \emph{not} to $x$,
because as discussed above,
we use $x$ as a model of an \emph{existing} transcription factor $X$ that influences some other transcription factor $F$ in a way we don't control, such that $F$'s concentration at any time is $f(x)$.
But for $x$ to obey the constraints in our model of transcriptional networks,
its ODE must have a single negative term $-\gamma x$.
By shifting $z$ down by $2a$,
we change the reference to it in $x$'s ODE to $z+2a$,
which cancels the term $-2a$ due to the upward shift of $p$,
restoring that $x$ has a single negative term $-\gamma x$ as required.

Note also that we have written $f(x)$,
even though $f(x)$ may not be a Laurent polynomial as required by the definition of transcriptional networks.
What we are modeling is that the environment rapidly adjusts the concentration $f(x)$ of transcription factor $F$ based on the concentration $x$ of $X$.
Although we have no control over the production of $F$ itself,
we can use it to regulate other transcription factors.
This is done using our compiler by placing a placeholder symbol
\texttt{f\_placeholder}, applying the construction to the resulting ODEs while instructing the compiler to ignore \texttt{f\_placeholder} (i.e., do not make top and bottom versions of it, and do not put any ODE for it in the transcriptional network ODEs),
and then substitute the actual (non-polynomial) expression for 
$f$ afterwards before simulating the ODEs.
See the example notebook in the code repository for details~\cite{ode2tn}.

\Cref{fig:extremum-seek-local-max-plot} shows the transcriptional network starting with various initial values of $z$, showing how $x$ changes over time.
In each case $x$ immediately shoots up to $z$, before converging (with some oscillation) to one of the two local maxima.



\end{toappendix}

\section{Conclusion}
\label{sec:conclusion}

A central message of this paper concerns the role of degradation.
Nonlinear protein degradation can be a source of effective cooperativity, enabling behaviors such as bistability and oscillation~\cite{buchler2005nonlinear};
however, we have shown that no such degradation nonlinearity is \emph{required} for analog completeness---common first-order decay suffices when transcriptional production is programmable as a positive Laurent polynomial.

We chose polynomial ODEs
as an implementation target for the reasons explained in the introduction:
they are ubiquitous, well-understood, and equivalent to other natural analog computing models such as GPAC~\cite{shannon1941mathematical} as well as being computationally powerful~\cite{bournez2017odes}.
However, since we allowed the definition of transcriptional networks to have arbitrary Laurent polynomials (a strict superset of polynomials) as production rates,
we could also have simulated arbitrary systems of ODEs with Laurent polynomials for each derivative.
Note that we could handle even more functions on the right-hand side of the ODEs (any non-hypertranscendental functions) by first converting to a system of polynomial ODEs~\cite{pour1974abstract}.
However, the construction of \Cref{alg:construction} works \emph{directly} as shown
(without any initial pre-transformation of the ODEs)
on Laurent polynomials,
hence could result in a simpler final transcriptional network than one obtained by first converting to polynomial ODEs.

Going the other direction,
we can ask how restricted production rates can be while maintaining their computational power.
As discussed in \Cref{sec:justification-laurent}, Laurent polynomials can approximate Hill functions, but inexactly.
In particular, the approximation carries a speed-accuracy tradeoff in the sense that the approximation works in the limit of some Hill dissociation constants being large or small, which translates into needing to speed up the maximal production rate to ``make room'' for such extreme constants.
It would be interesting to explore the details of this tradeoff, or other ways of approximating more commonly-used production rates with Laurent polynomials.
It would also be interesting to explore the computational power of Hill functions more directly.


\subsection{Discussion on implementation}

Although our results establish theoretical guarantees, real-world implementation of these transcriptional networks would face several practical challenges. Many of the simplifying assumptions made in our framework may differ from constraints of experimental settings.
In this section, we discuss these differences in more detail and highlight several open questions that arise when considering potential implementations.



One assumption made in \Cref{sec:justification-laurent} is that all activators of a gene must be simultaneously present and bound to the promoter to initiate transcription.
In reality, activators may bind independently, contributing additively to transcriptional activation, or they may bind cooperatively, producing nonlinear or synergistic effects that cannot be inferred from individual binding probabilities alone~\cite{chavez2015highly, zhang2015crispr}.

Another assumption we made is that the binding strengths of activators and repressors are significantly different. In reality, this is not always the case. For example, when using CRISPRa/i systems, such fine-tuned control over binding strength is not easily achievable. While certain techniques, such as introducing mismatches in the guide RNA, can partially adjust binding affinity, this tuning is not possible when a transcription factor acts as both an activator and a repressor for different genes.

Our construction assumes all transcription factors degrade at the same rate. While appropriate for degradation due to dilution,
proteases are known to degrade different proteins at different rates~\cite{bachmair1986vivo}, due for instance to the ``N-terminal residue'', the first amino acid in the peptide chain.
However, since CRISPR is one promising implementation route and all such transcription factors are actually made of the same protein (Cas9 or a variant),
differing only in the guide RNA strand,
such systems might in fact have roughly equal degradation rates even due to proteases.
Also, it would be interesting to consider systems in which different transcription factors $X_i$ have different degradation rates $-\gamma_i x_i$,
both under the model that we assume the engineer can choose degradation rates,
and under the model that degradation rates are chosen adversarially.
More generally, biological degradation can be enzyme-mediated and nonlinear rather than simple first-order decay~\cite{buchler2005nonlinear};
while our main result shows such richness is not \emph{required} for analog completeness,
characterizing what additional advantages (e.g., smaller circuits, faster convergence, or robustness to noise) it might confer is an interesting open question.


These observations point to several open directions. We note that our assumption that the presence of any repressor among multiple possible repressors is sufficient to prevent transcription is reasonable~\cite{alon2019introduction, bintu2005transcriptional}.
Based on this, we can consider different models for studying transcriptional networks:
\begin{enumerate}
    \item \textbf{Repressor-only networks:} transcriptional regulation is achieved exclusively through repressors, and gene expression occurs only in the absence of any bound repressor.
    
    \item \textbf{Limited-activation networks:} each gene promoter may be regulated by multiple repressors but by at most one activator, introducing a simple layer of transcriptional activation 
    while retaining the dominance of repression.
    
    \item \textbf{Additive-regulation networks:} multiple activators and repressors can act on the same promoter, and activators' combined influence on transcription is assumed to be additive, without cooperative or synergistic effects.
    
    \item \textbf{Single activation and repression networks:}
    Each gene promoter can be regulated by at most one activator and at most one repressor.
    
    \item \textbf{Single transcription factor networks:} Each gene promoter can be regulated by at most one transcription factor, either a repressor or an activator.
\end{enumerate}

\opt{sub,final}{\begin{credits}

\subsubsection{\ackname}We thank Ophelia S Venturelli for our earlier collaboration on related approaches to analog computation with transcriptional networks, which provided important context for the present work.
DD and ML were supported by NSF awards 2211793, 1900931 and DoE award DE-SC0024467.
DS was supported by NSF awards 2200290, SemiSynBio III: GOALI, DoE award DE-SC0024467, and a Schmidt Sciences Polymath Award.

\subsubsection{\discintname}
The authors have no competing interests to declare that are
relevant to the content of this article. 
\end{credits}}

\bibliographystyle{abbrv}
\bibliography{ref}

@article{bachmair1986vivo,
  title={In vivo half-life of a protein is a function of its amino-terminal residue},
  author={Bachmair, Andreas and Finley, Daniel and Varshavsky, Alexander},
  journal={science},
  volume={234},
  number={4773},
  pages={179--186},
  year={1986},
  publisher={American Association for the Advancement of Science}
}

@article{oishi2011biomolecular,
  title={Biomolecular implementation of linear {I/O} systems},
  author={Oishi, Kevin and Klavins, Eric},
  journal={IET systems biology},
  volume={5},
  number={4},
  pages={252--260},
  year={2011},
  publisher={IET}
}

@article{bychkov2021optimal,
  title={Optimal monomial quadratization for {ODE} systems},
  author={Bychkov, Andrey and Pogudin, Gleb},
  journal={ACM Communications in Computer Algebra},
  volume={54},
  number={3},
  pages={119--123},
  year={2020},
  publisher={ACM New York, NY, USA}
}

@article{kim2011synthetic,
  title={Synthetic in vitro transcriptional oscillators},
  author={Kim, Jongmin and Winfree, Erik},
  journal={Molecular systems biology},
  volume={7},
  number={1},
  pages={465},
  year={2011},
  publisher={John Wiley \& Sons, Ltd Chichester, UK}
}

@article{goentoro2009incoherent,
  title={The incoherent feedforward loop can provide fold-change detection in gene regulation},
  author={Goentoro, Lea and Shoval, Oren and Kirschner, Marc W and Alon, Uri},
  journal={Molecular cell},
  volume={36},
  number={5},
  pages={894--899},
  year={2009},
  publisher={Elsevier}
}

@article{pour1974abstract,
  title={Abstract computability and its relation to the general purpose analog computer (some connections between logic, differential equations and analog computers)},
  author={Pour-El, Marian Boykan},
  journal={Transactions of the American Mathematical Society},
  volume={199},
  pages={1--28},
  year={1974}
}

@article{daniel2013synthetic,
  title={Synthetic analog computation in living cells},
  author={Daniel, Ramiz and Rubens, Jacob R and Sarpeshkar, Rahul and Lu, Timothy K},
  journal={Nature},
  volume={497},
  number={7451},
  pages={619--623},
  year={2013},
  publisher={Nature Publishing Group UK London}
}

@misc{venturellicompleteness,
  title={Completeness of transcriptional repressor networks operating in the unsaturated regime},
  author={Venturelli, Ophelia S and Soloveichik, David},
  note={\url{https://www.solo-group.link/papers/complete_transcription.pdf}}
}

@article{nielsen2016genetic,
  title={Genetic circuit design automation},
  author={Nielsen, Alec AK and Der, Bryan S and Shin, Jonghyeon and Vaidyanathan, Prashant and Paralanov, Vanya and Strychalski, Elizabeth A and Ross, David and Densmore, Douglas and Voigt, Christopher A},
  journal={Science},
  volume={352},
  number={6281},
  pages={aac7341},
  year={2016},
  publisher={American Association for the Advancement of Science}
}

@article{krstic2000stability,
  title={Stability of extremum seeking feedback for general nonlinear dynamic systems},
  author={Krsti{\'c}, Miroslav and Wang, Hsin-Hsiung},
  journal={Automatica},
  volume={36},
  number={4},
  pages={595--601},
  year={2000},
  publisher={Elsevier}
}

@misc{ode2tn,
  title = {ode2tn Python package},
  howpublished = {\url{https://github.com/UC-Davis-molecular-computing/ode2tn}},
  key = "ode2tn"
}

@article{chen2023rate,
author = {Chen, Ho-Lin and Doty, David and Reeves, Wyatt and Soloveichik, David},
title = {Rate-Independent Computation in Continuous Chemical Reaction Networks},
year = {2023},
issue_date = {June 2023},
publisher = {Association for Computing Machinery},
address = {New York, NY, USA},
volume = {70},
number = {3},
issn = {0004-5411},
url = {https://doi.org/10.1145/3590776},
doi = {10.1145/3590776},
journal = {Journal of the ACM},
month = {May},
articleno = {22},
numpages = {61},
keywords = {Chemical reaction networks, analog computation, mass-action, piecewise-linear}
}

@article{gander2017digital,
  title={Digital logic circuits in yeast with {CRISPR-dCas9} {NOR} gates},
  author={Gander, Miles W and Vrana, Justin D and Voje, William E and Carothers, James M and Klavins, Eric},
  journal={Nature communications},
  volume={8},
  number={1},
  pages={15459},
  year={2017},
  publisher={Nature Publishing Group UK London}
}

@article{rudge2016characterization,
  title={Characterization of intrinsic properties of promoters},
  author={Rudge, Timothy J and Brown, James R and Federici, Fernan and Dalchau, Neil and Phillips, Andrew and Ajioka, James W and Haseloff, Jim},
  journal={{ACS} synthetic biology},
  volume={5},
  number={1},
  pages={89--98},
  year={2016},
  publisher={ACS Publications}
}

@article{santos2020multistable,
  title={Multistable and dynamic {CRISPRi}-based synthetic circuits},
  author={Santos-Moreno, Javier and Tasiudi, Eve and Stelling, Joerg and Schaerli, Yolanda},
  journal={Nature communications},
  volume={11},
  number={1},
  pages={2746},
  year={2020},
  publisher={Nature Publishing Group UK London}
}

@article{elowitz2000synthetic,
  title={A synthetic oscillatory network of transcriptional regulators},
  author={Elowitz, Michael B and Leibler, Stanislas},
  journal={Nature},
  volume={403},
  number={6767},
  pages={335--338},
  year={2000},
  publisher={Nature Publishing Group UK London}
}

@article{gardner2000construction,
  title={Construction of a genetic toggle switch in {{\em Escherichia coli}}},
  author={Gardner, Timothy S and Cantor, Charles R and Collins, James J},
  journal={Nature},
  volume={403},
  number={6767},
  pages={339--342},
  year={2000},
  publisher={Nature Publishing Group UK London},
  doi={10.1038/35002131}
}

@article{buchler2005nonlinear,
  title={Nonlinear protein degradation and the function of genetic circuits},
  author={Buchler, Nicolas E and Gerland, Ulrich and Hwa, Terence},
  journal={Proceedings of the National Academy of Sciences},
  volume={102},
  number={27},
  pages={9559--9564},
  year={2005},
  publisher={National Academy of Sciences}
}

@article{schlogl1972chemical,
  title={Chemical reaction models for non-equilibrium phase transitions},
  author={Schl{\"o}gl, Friedrich},
  journal={Zeitschrift f{\"u}r physik},
  volume={253},
  number={2},
  pages={147--161},
  year={1972},
  publisher={Springer}
}

@article{paul1993analog,
  title={Analog rank filtering},
  author={Paul, Steffen and H\"{u}per, Knut},
  journal={IEEE Transactions on Circuits and Systems I: Fundamental Theory and Applications},
  volume={40},
  number={7},
  pages={469--476},
  year={1993},
  publisher={IEEE}
}

@article{cardelli2020electric,
  title={From electric circuits to chemical networks},
  author={Cardelli, Luca and Tribastone, Mirco and Tschaikowski, Max},
  journal={Natural Computing},
  volume={19},
  pages={237--248},
  year={2020},
  publisher={Springer}
}

@incollection{hars1981inverse,
  title={On the inverse problem of reaction kinetics},
  author={H{\'a}rs, Vera and T{\'o}th, J{\'a}nos},
  booktitle={Qualitative Theory of Differential Equations},
  editor={Farkas, M. and Hatvani, L.},
  series={Colloquia Mathematica Societatis J{\'a}nos Bolyai},
  volume={30},
  pages={363--379},
  year={1981},
  publisher={North-Holland},
  address={Amsterdam}
}

@book{alon2019introduction,
  title={An introduction to systems biology: Design principles of biological circuits},
  author={Alon, Uri},
  year={2019},
  publisher={Chapman and Hall/CRC}
}

@article{dejong2002modeling,
  title={Modeling and simulation of genetic regulatory systems: a literature review},
  author={de Jong, Hidde},
  journal={Journal of Computational Biology},
  volume={9},
  number={1},
  pages={67--103},
  year={2002},
  publisher={Mary Ann Liebert, Inc.},
  doi={10.1089/10665270252833208}
}

@article{bournez2017odes,
author = {Bournez, Olivier and Gra\c{c}a, Daniel S. and Pouly, Amaury},
title = {Polynomial Time Corresponds to Solutions of Polynomial Ordinary Differential Equations of Polynomial Length},
year = {2017},
issue_date = {December 2017},
publisher = {Association for Computing Machinery},
address = {New York, NY, USA},
volume = {64},
number = {6},
issn = {0004-5411},
url = {https://doi.org/10.1145/3127496},
doi = {10.1145/3127496},
journal = {J. ACM},
month = {oct},
articleno = {38},
numpages = {76},
}

@article{shannon1941mathematical,
  title={Mathematical theory of the differential analyzer},
  author={Shannon, Claude E},
  journal={Journal of Mathematics and Physics},
  volume={20},
  number={1-4},
  pages={337--354},
  year={1941},
  publisher={Wiley Online Library}
}

@article{bush1931differential,
  title={The differential analyzer. A new machine for solving differential equations},
  author={Bush, Vannevar},
  journal={Journal of the Franklin Institute},
  volume={212},
  number={4},
  pages={447--488},
  year={1931},
  publisher={Elsevier}
}

@article{willamowski1980irregular,
  title={Irregular oscillations in a realistic abstract quadratic mass action system},
  author={Willamowski, K-D and R{\"o}ssler, OE},
  journal={Zeitschrift f{\"u}r Naturforschung A},
  volume={35},
  number={3},
  pages={317--318},
  year={1980},
  publisher={Verlag der Zeitschrift f{\"u}r Naturforschung}
}

@article{chavez2015highly,
  title={Highly efficient {C}as9-mediated transcriptional programming},
  author={Chavez, Alejandro and Scheiman, Jonathan and Vora, Suhani and Pruitt, Benjamin W and Tuttle, Marcelle and PR Iyer, Eswar and Lin, Shuailiang and Kiani, Samira and Guzman, Christopher D and Wiegand, Daniel J and others},
  journal={Nature methods},
  volume={12},
  number={4},
  pages={326--328},
  year={2015},
  publisher={Nature Publishing Group US New York}
}

@article{konermann2015genome,
  title={Genome-scale transcriptional activation by an engineered CRISPR-Cas9 complex},
  author={Konermann, Silvana and Brigham, Mark D and Trevino, Alexandro E and Joung, Julia and Abudayyeh, Omar O and Barcena, Clea and Hsu, Patrick D and Habib, Naomi and Gootenberg, Jonathan S and Nishimasu, Hiroshi and others},
  journal={Nature},
  volume={517},
  number={7536},
  pages={583--588},
  year={2015},
  publisher={Nature Publishing Group UK London}
}

@article{tanenbaum2014protein,
  title={A protein-tagging system for signal amplification in gene expression and fluorescence imaging},
  author={Tanenbaum, Marvin E and Gilbert, Luke A and Qi, Lei S and Weissman, Jonathan S and Vale, Ronald D},
  journal={Cell},
  volume={159},
  number={3},
  pages={635--646},
  year={2014},
  publisher={Elsevier}
}

@article{moon2025dual,
  title={Dual-mode CRISPRa/i for genome-scale metabolic rewiring in Escherichia coli},
  author={Moon, Soo Young and Kim, Mi Ri and An, Nan-Yeong and Noh, Myung Hyun and Lee, Ju Young},
  journal={Nucleic Acids Research},
  volume={53},
  number={15},
  pages={gkaf818},
  year={2025},
  publisher={Oxford University Press}
}

@article{zhang2015crispr,
  title={CRISPR/gRNA-directed synergistic activation mediator (SAM) induces specific, persistent and robust reactivation of the HIV-1 latent reservoirs},
  author={Zhang, Yonggang and Yin, Chaoran and Zhang, Ting and Li, Fang and Yang, Wensheng and Kaminski, Rafal and Fagan, Philip Regis and Putatunda, Raj and Young, Won-Bin and Khalili, Kamel and others},
  journal={Scientific reports},
  volume={5},
  number={1},
  pages={16277},
  year={2015},
  publisher={Nature Publishing Group UK London}
}

@article{bintu2005transcriptional,
  title={Transcriptional regulation by the numbers: models},
  author={Bintu, Lacramioara and Buchler, Nicolas E and Garcia, Hernan G and Gerland, Ulrich and Hwa, Terence and Kondev, Jan{\'e} and Phillips, Rob},
  journal={Current opinion in genetics \& development},
  volume={15},
  number={2},
  pages={116--124},
  year={2005},
  publisher={Elsevier}
}

@article{sarpeshkar2014analog,
  title={Analog synthetic biology},
  author={Sarpeshkar, Rahul},
  journal={Philosophical Transactions of the Royal Society A: Mathematical, Physical and Engineering Sciences},
  volume={372},
  number={2012},
  pages={20130110},
  year={2014},
  publisher={The Royal Society},
  doi={10.1098/rsta.2013.0110}
}

@article{qi2013repurposing,
  title={Repurposing {CRISPR} as an {RNA}-guided platform for sequence-specific control of gene expression},
  author={Qi, Lei S and Larson, Matthew H and Gilbert, Luke A and Doudna, Jennifer A and Weissman, Jonathan S and Arkin, Adam P and Lim, Wendell A},
  journal={Cell},
  volume={152},
  number={5},
  pages={1173--1183},
  year={2013},
  publisher={Elsevier},
  doi={10.1016/j.cell.2013.02.022}
}

@article{vilela2008parameter,
  title={Parameter optimization in {S}-system models},
  author={Vilela, Marco and Chou, I-Chun and Vinga, Susana and Vasconcelos, Ana Tereza R and Voit, Eberhard O and Almeida, Jonas S},
  journal={BMC Systems Biology},
  volume={2},
  pages={35},
  year={2008},
  publisher={BioMed Central},
  doi={10.1186/1752-0509-2-35}
}

@article{arceo2015chemical,
  title={Chemical reaction network approaches to {Biochemical Systems Theory}},
  author={Arceo, Carlene Perpetua P and Jose, Editha C and Marin-Sanguino, Alberto and Mendoza, Eduardo R},
  journal={Mathematical Biosciences},
  volume={269},
  pages={135--152},
  year={2015},
  publisher={Elsevier},
  doi={10.1016/j.mbs.2015.08.022}
}

@article{muller2016sign,
  title={Sign conditions for injectivity of generalized polynomial maps with applications to chemical reaction networks and real algebraic geometry},
  author={M{\"u}ller, Stefan and Feliu, Elisenda and Regensburger, Georg and Conradi, Carsten and Shiu, Anne and Dickenstein, Alicia},
  journal={Foundations of Computational Mathematics},
  volume={16},
  number={1},
  pages={69--97},
  year={2016},
  publisher={Springer},
  doi={10.1007/s10208-014-9239-3}
}

@article{boros2018center,
  title={The center problem for the {Lotka} reactions with generalized mass-action kinetics},
  author={Boros, Bal{\'a}zs and Hofbauer, Josef and M{\"u}ller, Stefan and Regensburger, Georg},
  journal={Qualitative Theory of Dynamical Systems},
  volume={17},
  number={2},
  pages={403--410},
  year={2018},
  publisher={Springer},
  doi={10.1007/s12346-017-0243-2}
}

@article{briat2016antithetic,
  title={Antithetic integral feedback ensures robust perfect adaptation in noisy biomolecular networks},
  author={Briat, Corentin and Gupta, Ankit and Khammash, Mustafa},
  journal={Cell Systems},
  volume={2},
  number={1},
  pages={15--26},
  year={2016},
  publisher={Elsevier},
  doi={10.1016/j.cels.2016.01.004}
}

@article{aoki2019universal,
  title={A universal biomolecular integral feedback controller for robust perfect adaptation},
  author={Aoki, Stephanie K and Lillacci, Gabriele and Gupta, Ankit and Baumschlager, Armin and Schweingruber, David and Khammash, Mustafa},
  journal={Nature},
  volume={570},
  number={7762},
  pages={533--537},
  year={2019},
  publisher={Nature Publishing Group UK London},
  doi={10.1038/s41586-019-1321-1}
}

@article{filo2022hierarchy,
  title={A hierarchy of biomolecular proportional-integral-derivative feedback controllers for robust perfect adaptation and dynamic performance},
  author={Filo, Maurice and Kumar, Sant and Khammash, Mustafa},
  journal={Nature Communications},
  volume={13},
  number={1},
  pages={2119},
  year={2022},
  publisher={Nature Publishing Group UK London},
  doi={10.1038/s41467-022-29640-7}
}


\end{document}